\documentclass[11pt]{article}

\usepackage[noline,ruled,noend,linesnumbered]{algorithm2e}
\usepackage{amsfonts,amsmath,amssymb,amsthm}
\usepackage{fullpage}
\usepackage{xcolor}
\usepackage{cite}


\newtheorem{thm}{Theorem}[section]
\newtheorem{claim}[thm]{Claim}

\newtheorem{lemma}[thm]{Lemma}
\newtheorem{prop}[thm]{Proposition} 
\newtheorem{fact}[thm]{Fact}
\theoremstyle{definition}
\newtheorem{defn}[thm]{Definition}
\theoremstyle{plain}

\DeclareMathOperator*{\E}{E}
\DeclareMathOperator*{\Var}{Var}

\DeclareMathOperator*{\argmin}{argmin}
\newcommand{\bfx}{X}
\newcommand{\bfy}{Y}
\newcommand{\core}{\mathrm{core}}
\newcommand{\dist}{{\rm dist}}
\newcommand{\distH}{\dist_{\rm Ham}}
\newcommand{\distp}{\dist_{p}}

\newcommand{\disttwo}{\dist_{2}}
\newcommand{\dTV}{\mathrm{d}_{\mathrm{TV}}}

\newcommand{\Inf}{\mathrm{Inf}}
\newcommand{\poly}{{\rm poly}}
\newcommand{\R}{\mathbb{R}}
\newcommand{\supp}{\mathrm{supp}}

\newcommand{\EstInf}{\textsc{EstimateInf}\xspace}

\newcommand{\Vondrak}{Vondr{\'a}k\xspace}


\newcommand{\calF}{{\cal F}}

\newcommand{\calP}{{\cal P}}

\title{Testing submodularity and other properties of valuation functions}

\author{
	Eric Blais\\
	School of Computer Science\\
	University of Waterloo\\
	\texttt{eric.blais@uwaterloo.ca}
	\and
	Abhinav Bommireddi\\
	School of Computer Science\\
	University of Waterloo\\
	\texttt{vabommir@uwaterloo.ca}
}

\begin{document}

\maketitle

\begin{abstract}
We show that for any constant $\epsilon > 0$ and $p \ge 1$, it is possible to distinguish functions $f : \{0,1\}^n \to [0,1]$ that are submodular from those that are $\epsilon$-far from every submodular function in $\ell_p$ distance with a \emph{constant} number of queries. 

More generally, we extend the testing-by-implicit-learning framework of Diakonikolas et al.~(2007) to show that every property of real-valued functions that is well-approximated in $\ell_2$ distance by a class of $k$-juntas for some $k = O(1)$ can be tested in the $\ell_p$-testing model with a constant number of queries. This result, combined with a recent junta theorem of Feldman and \Vondrak (2016), yields the constant-query testability of submodularity. It also yields constant-query testing algorithms for a variety of other natural properties of valuation functions, including fractionally additive (XOS) functions, OXS functions, unit demand functions, coverage functions, and self-bounding functions.
\end{abstract}

\section{Introduction}

Property testing is concerned with approximate decision problems of the following form: given oracle access to some function $f : \mathcal{X} \to \mathcal{Y}$ and some fixed property $\mathcal{P}$ of such functions, how many oracle calls (or queries) to $f$ does a bounded-error randomized algorithm need to distinguish the cases where $f$ has the property $\mathcal{P}$ from the case where $f$ is $\epsilon$-far---under some appropriately defined metric---from having the same property? 
Remarkably, many natural properties of functions can be tested with a number of queries that is \emph{independent} of the size of the function's domain. For example, for any constant $\epsilon > 0$ and $t \ge 1$, a constant number of queries suffices to test whether a Boolean function $f : \{0,1\}^n \to \{0,1\}$ is linear~\cite{BLR93}; a polynomial of degree at most $t$~\cite{RS96}; a $t$-junta~\cite{FKRSS04,Blais09}; a monomial~\cite{PRS02}; computable by a Boolean circuit of size $t$~\cite{DLMORSW07}; or a linear threshold function~\cite{MORS10}.

In this work, we consider the problem of testing properties of bounded \emph{real-valued} functions over the Boolean hypercube. In particular, are there natural examples of such properties that are testable with a constant number of queries? This question is best considered in the \emph{$\ell_p$ testing} framework introduced by Berman, Raskhodnikova, and Yaroslavtsev~\cite{BRY14}. In this setting, the distance between a function $f : \{0,1\}^n \to [0,1]$ and some property $\mathcal{P}$ of these functions is $\dist_p(f,\mathcal{P}) = \inf_{g \in \mathcal{P}} \| f - g \|_p$.

\subsection{Testing properties of valuation functions}

Natural properties of bounded real-valued Boolean functions have been studied extensively in the context of valuation functions in algorithmic game theory. For a sequence of $n$ goods labeled with the indices $1,\ldots,n$, we can encode the value of each subset of these goods to some agent with a function $f : \{0,1\}^n \to [0,1]$ by setting $f(x)$ to be the (possibly normalized) value of the subset $\{ i \in [n] : x_i = 1\}$ to the agent. Such a valuation function $f$ is
\begin{description}
	\item[Additive] if there are weights $w_1,\ldots,w_n$ such that $f(x) = \sum_{i : x_i = 1} w_i$;

	\item[{\rm a} Coverage function] 
	if there exists a universe $U$, non-negative weights $\{w_u\}_{u \in U}$, and subsets $A_1,..., A_n \subseteq U$ such that $f(x) = \sum_{u \in \bigcup_{i : x_i = 1}A_i} w_u$. 

	\item[Unit demand] 
	if there are weights $w_1,\ldots,w_n$ such that $f(x) = \max\{ w_i : x_i = 1\}$;

	\item[OXS] 
	if there are $k \ge 1$ unit demand functions $g_1,\ldots,g_k$ such that $f(x) = \max\{ g_1(x^{(1)}),\ldots,g_k(x^{(k)})\}$ where the maximum is taken over all $x^{(1)},\ldots,x^{(k)}$ such that for every $i \in [n]$, 
	$x_i = \sum_{j=1}^k x^{(j)}_{\,i}$;
	
	\item[Gross Substitutes] if for any $p' \le p \in \R^n$ and any $x,x'$ that maximize $f(x) - \sum_{i : x_i = 1} p_i$ and $f(x') - \sum_{i : x'_i = 1} p'_i$, respectively, every $j \in [n]$ for which $x_j = 1$ and $p_j = p'_j$ also satisfies $x'_j = 1$;
	
	\item[Submodular] if $f(x) + f(y) \ge f(x \wedge y) + f(x \vee y)$ for every $x,y \in \{0,1\}^n$, where $\wedge$ and $\vee$ are the bitwise AND and OR operations;

	\item[Fractionally subadditive (XOS)] iff there are non-negative real valued weights $\{w_{ij}\}_{i,j \le n}$ such that $f(x) = \max_i \sum_{j} w_{ij}\cdot x_j$;
	
	\item[Self-bounding] if $f(x) \geq \sum_i(f(x) - \min_{x_i} f(x))$, where $ \min_{x_i} f(x) = \min\{f(x), f(x \oplus e_i)\}$ and $\oplus$ is the bitwise XOR operator; and

	\item[Subadditive] if $f(x \cup y) \le f(x) + f(y)$ for every $x,y \in \{0,1\}^n$.
\end{description}
Each of these properties enforces some structure on valuation functions, and much work has been devoted to better understanding these structures (and their algorithmic implications) by studying the properties through the lenses of learning theory~\cite{BH11,BCIW12,FK14}, optimization~\cite{Fei09,FMV11}, approximation~\cite{FV16,FV15}, and sketching~\cite{BDFKNR12}.
The problem of testing whether an unknown valuation function satisfies one of these properties offers another angle from which we can learn more about the structure imposed on the functions that satisfy these properties. 

Indeed, there has already been some recent developments on the study of testing these properties. Notably, Seshadhri and \Vondrak~\cite{SV11} initiated the study of testing submodularity for functions over the hypercube and showed that in the setting where we measure the distance to submodularity in terms of Hamming distance (rather than $\ell_p$ distance), submodularity can be tested with 
$\epsilon^{-\sqrt{n} \log n}$
queries and that it \emph{cannot} be tested with a number of queries that is independent of $n$. 
Subsequently, Feldman and \Vondrak~\cite{FV16} showed that in the $\ell_1$ testing framework, we can do much better: testing submodularity in this model requires a number of queries that is only logarithmic in $n$.
Our first result shows that, in fact, for any value of $p \ge 1$, it is possible to test submodularity in the $\ell_p$ setting with a number of queries that is completely \emph{independent} of $n$.

\begin{thm}
\label{thm:submodular}
For any $\epsilon > 0$ and any $p \ge 1$, there is an $\epsilon$-tester for submodularity in the $\ell_p$ testing model with query complexity 
$2^{\tilde{O}(1/\epsilon^{\max\{2,p\}})}$.
\end{thm}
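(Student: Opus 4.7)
The plan is to derive Theorem~\ref{thm:submodular} as a direct consequence of two ingredients: (i) the Feldman--\Vondrak junta theorem, which guarantees that every submodular $f : \{0,1\}^n \to [0,1]$ is $\delta$-close in $\ell_2$ to a submodular $k$-junta with $k = \tilde{O}(1/\delta^2)$; and (ii) the extension of the Diakonikolas et al.~testing-by-implicit-learning framework that the abstract promises, namely that any property of real-valued functions which is well-approximated in $\ell_2$ by a class of $k$-juntas is testable in the $\ell_p$-model with $2^{\tilde{O}(k)}$ queries.

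First I would calibrate the junta approximation parameter $\delta$ so that $\ell_2$-closeness to a submodular $k$-junta yields $\ell_p$-closeness of order $\epsilon$. For $p \le 2$, Jensen's inequality gives $\|f - g\|_p \le \|f - g\|_2$, so choosing $\delta = \Theta(\epsilon)$ suffices and produces a junta of size $k = \tilde{O}(1/\epsilon^2)$. For $p > 2$, the bound $|f - g| \le 1$ yields $\|f - g\|_p^p \le \|f - g\|_2^2$, i.e.\ $\|f - g\|_p \le \|f - g\|_2^{2/p}$, so taking $\delta = \Theta(\epsilon^{p/2})$ achieves $\ell_p$ distance $O(\epsilon)$ at the cost of a junta size $k = \tilde{O}(1/\epsilon^p)$. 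In both regimes $k = \tilde{O}(1/\epsilon^{\max\{2,p\}})$, and submodularity is $O(\epsilon)$-approximated in $\ell_p$ by the class of submodular $k$-juntas.

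Next I would invoke the generalized implicit-learning tester on the property of being a submodular $k$-junta. The tester identifies the (at most $k$) relevant coordinates of $f$ using an $\ell_p$-compatible notion of influence, then estimates $f$ on a small sample of restrictions of those coordinates in order to produce an approximate description of the $k$-variable function the data are consistent with; finally, it explicitly checks whether this description is close to a submodular function on $\{0,1\}^k$, which can be decided in time $2^{O(k)}$ since the domain is of constant size. The overall query complexity is $2^{\tilde{O}(k)} = 2^{\tilde{O}(1/\epsilon^{\max\{2,p\}})}$, matching the statement of the theorem.

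The hard part, which must absorb essentially all the technical work of the paper, lies in establishing ingredient (ii): lifting the implicit-learning paradigm from the Boolean/Hamming setting to real-valued functions measured in $\ell_p$. Two points require care. First, the notion of influence used to identify the junta variables must be defined and estimated in a way that is faithful to $\ell_p$ distance rather than Hamming distance, and one must show that freezing the low-influence coordinates perturbs $f$ by at most $O(\epsilon)$ in $\ell_p$ with high probability. Second, the sample-based reconstruction of the $k$-variable restriction must give an $\ell_p$ estimate accurate enough to distinguish a submodular junta from one that is $\Omega(\epsilon)$-far, which forces the sample complexity to depend on $\epsilon$ through the same $\max\{2,p\}$ exponent. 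Once these $\ell_p$-variants of the implicit-learning lemmas are in hand, Theorem~\ref{thm:submodular} follows immediately from the Feldman--\Vondrak junta bound, with no submodularity-specific argument required.
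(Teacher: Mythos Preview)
Your proposal is correct and matches the paper's high-level strategy: combine the Feldman--\Vondrak junta theorem with the generalized testing-by-implicit-learning result (the paper's Theorem~\ref{thm:general}) to obtain the stated bound.

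The one substantive difference is in how general $p$ is handled. You propose to build the implicit-learning machinery directly in $\ell_p$, with an ``$\ell_p$-compatible notion of influence'' and $\ell_p$-accurate reconstruction of the core. The paper instead proves Theorem~\ref{thm:general} \emph{only} in $\ell_2$---where influence has its natural Fourier-analytic meaning and the analysis is cleanest---and then invokes the black-box relation $Q_p(\calP,\epsilon) \le Q_2(\calP,\epsilon^{p/2})$ (Fact~\ref{fact:lp-testing}, from~\cite{BRY14}) to transfer the bound. Your calibration of $\delta = \Theta(\epsilon^{p/2})$ for $p>2$ is exactly the computation underlying that fact, so the arithmetic agrees; the paper's route simply avoids having to redo the influence and concentration lemmas for each $p$. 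Working natively in $\ell_p$ as you suggest would presumably go through, but it buys nothing here and loses the convenience of the $\ell_2$ Fourier identities (Facts~\ref{fact:inf-fourier}--\ref{fact:inf-junta}) that drive the analysis of the search phase.
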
	

Another property that has been considered in the (standard Hamming distance) testing model is that of being a coverage function. Chakrabarty and Huang~\cite{CH15} showed that for constant values of $\epsilon > 0$, $O(nm)$ queries suffice to $\epsilon$-test whether a function $f$ is a coverage function on some universe $U$ of size $|U| \le m$. Note that, unlike in the learning and approximation settings, bounds on the number of queries required to test some property $\calP$ do not imply anything about number of queries required to test properties $\calP' \subset \calP$, so even though coverage functions are submodular, results on testing submodularity do not imply any bounds on the query complexity for testing coverage functions. Nonetheless, our next result shows that this property---along with most of the other properties of valuation functions listed above---can also be tested with a number of queries that is independent of $n$.

\begin{thm}
\label{thm:others}
For any $\epsilon > 0$ and any $p \ge 1$, there are $\epsilon$-testers in the $\ell_p$ testing model for 
\begin{itemize}
	\item additive functions, coverage functions, unit demand functions, OXS functions, and gross substitute functions that each have query complexity $2^{\tilde{O}(1/\epsilon^{\max\{2,p\}})}$; and
	\item fractional subadditivity and self-bounded functions that have query complexity $2^{2^{\tilde{O}(1/\epsilon^{\max\{2,p\}})}}$.
\end{itemize}
\end{thm}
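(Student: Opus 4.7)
The plan for Theorem~\ref{thm:others} mirrors that of Theorem~\ref{thm:submodular} and reduces the task to two ingredients: the extension of the testing-by-implicit-learning framework to $\ell_2$ testing of real-valued functions (the meta-theorem previewed in the abstract), together with, for each listed class, a $k$-junta approximation theorem in $\ell_2$. The meta-theorem yields a tester with query complexity $2^{\tilde O(k)}$, so the whole job reduces to exhibiting the right $k$ for each property.

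First I would dispatch the dependence on $p$. Since $f,g$ take values in $[0,1]$, monotonicity of $\ell_p$ norms under the uniform distribution gives $\|f-g\|_p \le \|f-g\|_2$ when $p \le 2$, while for $p \ge 2$ the bound $|f-g|\le 1$ yields $\|f-g\|_p^{\,p} \le \|f-g\|_2^{\,2}$, so that $\|f-g\|_p \ge \epsilon$ implies $\|f-g\|_2 \ge \epsilon^{p/2}$. An $\epsilon^{p/2}$-tester in $\ell_2$ therefore serves as an $\epsilon$-tester in $\ell_p$, which accounts exactly for the $\max\{2,p\}$ exponent appearing in both bounds.

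For the junta approximations I would invoke Feldman--\Vondrak. For additive, coverage, unit demand, OXS, and gross substitutes --- all subclasses of submodular --- the submodular junta theorem of \cite{FV16}, together with in-class refinements, gives $k = 2^{\tilde O(1/\epsilon^2)}$, matching the first bullet. For fractionally subadditive (XOS) functions, the sharper analysis of \cite{FV15} gives $k = 2^{2^{\tilde O(1/\epsilon^2)}}$, and an analogous theorem for self-bounding functions, which contain XOS, yields the second bullet.

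The main obstacle is that each junta approximator must itself be \emph{in-class}: the meta-tester only accepts when the implicit junta lies in $\calP$ (up to small $\ell_2$ slack), so approximating, say, a coverage function by an arbitrary submodular junta would not suffice. For the subclasses in the first bullet one has to inspect the Feldman--\Vondrak construction --- which averages $f$ over the non-influential coordinates after identifying those with large total influence --- and verify that this operation preserves each defining structural property. Additivity, unit demand, and OXS behave well under coordinate restriction and averaging; coverage can be handled by coarsening the underlying universe; gross substitutes is the most delicate, and would need a short closure-under-restriction lemma combined with a rounding step. The XOS and self-bounding cases require the same care, but the permitted double-exponential blow-up in $k$ leaves plenty of slack for a looser analysis.
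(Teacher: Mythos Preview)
Your overall architecture---reduce to $\ell_2$ via Fact~\ref{fact:lp-testing}, then invoke the general implicit-learning theorem together with a junta approximation---is exactly right. But the ``main obstacle'' you flag is not an obstacle at all, and worrying about it leads you to unnecessary (and, for gross substitutes, genuinely delicate) work. Reread the hypothesis of Theorem~\ref{thm:general}: it only asks that every $f \in \calP$ be $\frac{\epsilon}{10^6}$-close in $\ell_2$ to \emph{some} $k$-junta $h$; there is no requirement that $h \in \calP$. (In the proof, the algorithm tests against the property $\calF$ of all $k$-juntas that are close to $\calP$, not against $\calP$ itself.) Consequently, for each class in the first bullet you only need to observe, via the hierarchy Lemma~\ref{lem:hierarchy}, that it is contained in the submodular functions; the first part of the Feldman--\Vondrak theorem then gives a $k$-junta approximator with $k = O(\tfrac{1}{\epsilon^2}\log\tfrac1\epsilon)$, regardless of whether that junta is itself additive, coverage, OXS, etc. Likewise for the second bullet: XOS $\subseteq$ self-bounding by Lemma~\ref{lem:hierarchy}, so the self-bounding part of Feldman--\Vondrak gives $k = 2^{O(1/\epsilon^2)}$ for both. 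No closure-under-averaging lemmas are needed.

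One numerical slip: for the first bullet you wrote $k = 2^{\tilde O(1/\epsilon^2)}$, but the correct value from Feldman--\Vondrak is $k = \tilde O(1/\epsilon^2)$; plugging this into the $2^{O(k\log k)}/\epsilon^{10}$ bound of Theorem~\ref{thm:general} is what yields $2^{\tilde O(1/\epsilon^2)}$ queries. With your stated $k$ the query complexity would be doubly exponential and would not match the first bullet.
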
	

Theorems~\ref{thm:submodular} and~\ref{thm:others} are both special cases of a general testing result that we obtain by extending the technique of \emph{testing by implicit learning} of Diakonikolas et al.~\cite{DLMORSW07}.
We describe this general result in more details below.

\subsection{Testing real-valued functions by implicit learning}

There is a strong connection between property testing and learning theory that goes back to the seminal work of Goldreich, Goldwasser, and Ron~\cite{GGR98}. As they first observed, any proper learning algorithm for the class of functions that have some property $\mathcal{P}$ can also be used to test $\mathcal{P}$: run the learning algorithm, and verify whether the resulting hypothesis function $h$ is close to the tested function $f$ or not. 
This approach yields good bounds on the number of queries required to test many properties of functions, but, as simple information theory arguments show, it cannot yield query complexity bounds that are smaller than $\log n$ for almost all natural properties of functions over $\{0,1\}^n$. 

Diakonikolas et al.~\cite{DLMORSW07} bypassed this limitation for the special case when every function that has some property $\mathcal{P}$ is close to a junta. A function $f : \{0,1\}^n \to [0,1]$ is a \emph{$k$-junta} if there is a set $J \subseteq [n]$ of cardinality $|J| \le k$ such that the value of $f$ on any input $x$ is completely determined by the values $x_i$ for each $i \in J$. Every $k$-junta $f$ has corresponding ``core'' functions $f_{\mathrm{core}} : \{0,1\}^k \to \{0,1\}$ that define its value based on the value of the $k$ relevant coordinates of its input. Diakonikolas et al.'s key insight is that for testing properties whose functions are (very) close to juntas, it suffices to learn the core of the input function---without having to identify the identity of the relevant coordinates.

The wide applicability of the testing-by-implicit-learning methodology is due to the fact that for many natural properties of Boolean functions, the functions that have these properties must necessarily be close to juntas under the Hamming distance. The starting point for the current research is a recent breakthrough of Feldman and \Vondrak, who showed that a similar junta theorem holds for many properties of real-valued functions when closeness is measured according to $\ell_2$ distance.

\newtheorem*{FVthm}{Feldman--\Vondrak junta theorem}
\begin{FVthm}
Fix any $\epsilon \in (0, \frac{1}{2})$. For every $f : \{0,1\}^n \to [0,1]$,
\begin{itemize}
 	\item if $f$ is submodular then
 	there exists a submodular function $g: \{0, 1\}^n \to [0, 1]$ 
 	that is a $O(\frac{1}{\epsilon^2}\log \frac{1}{\epsilon})$-junta such that $\|f - g\|_2 \le \epsilon$; and
 	\item if $f$ is self-bounding then 
 	there exists a self-bounding function $g: \{0, 1\}^n \to [0, 1]$ 
 	that is a $2^{O(\frac{1}{\epsilon^2})}$-junta such that $\|f - g\|_2 \le \epsilon$;
 \end{itemize} 
\end{FVthm}

The logarithmic dependence on $n$ for the problem of testing submodularity in the $\ell_1$ testing model~\cite{FV16} follows directly from Feldman and \Vondrak's junta theorem and the (standard) testing-by-proper-learning connection. This junta theorem also suggests a natural approach for obtaining a constant query complexity for the same problem by combining it with a testing-by-implicit-learning algorithm. In order to implement this approach, however, new testing-by-implicit-learning techniques are required to overcome two obstacles.

The first obstacle is that all existing testing-by-implicit-learning algorithms~\cite{DLMORSW07,DLMSW08,CGM11,GOSSW11} are designed for properties that contain functions which are close to juntas in Hamming distance, not $\ell_p$ distance. This is a stronger condition, and enables the analysis of these algorithms to assume that with large probability, when $f$ is very close to a $k$-junta $f'$, the queries $x$ made by the algorithm all satisfy $f(x) = f'(x)$. In the $\ell_p$ distance model, however, we can have a function $f$ that is extremely close to a $k$-junta but still has $f(x) \neq f'(x)$ for many (or even every!) input $x$.

The second (related) obstacle that we encounter when considering submodular functions is that current testing-by-implicit-learning algorithms only work in the regime where the functions in $\calP$ are $\epsilon$-close to $k$-juntas for some $k < \epsilon^{-1/2}$. (See for example the discussion in \S2.5 of~\cite{Ser10}.) This condition is satisfied by the properties of Boolean functions that have been studied previously, but the bounds in the Feldman--\Vondrak junta theorem, however, do not satisfy this requirement.

We give a new algorithm for testing-by-implicit-learning that overcomes both of these obstacles. As a result, we obtain the following general theorem.

\begin{thm}
\label{thm:general}
For any $0 < \epsilon < \frac{1}{2}$ and any property $\calP$ of functions mapping $\{0,1\}^n \to [0,1]$, if $k \ge 1$ is such that for every function $f \in \calP$, there is a $k$-junta $h$ that satisfies $\|f - h\|_2 \le \frac{\epsilon}{10^6}$, then there is an $\epsilon$-tester for $\calP$ in the $\ell_2$ testing model with query complexity $\frac{2^{O(k\log k)}}{\epsilon^{10}}$.
\end{thm}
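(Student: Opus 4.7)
The plan is to extend the testing-by-implicit-learning framework of Diakonikolas et al.\ to $\ell_2$-approximate juntas. If $f \in \calP$, the hypothesis guarantees a $k$-junta $h$ with relevant set $J \subseteq [n]$ of size at most $k$ and $\|f - h\|_2 \le \epsilon/10^6$; the goal is to produce a hypothesis $\hat{f}_{\core} : \{0,1\}^k \to [0,1]$ approximating the core of $h$ in $\ell_2$, and then to accept iff $\hat{f}_{\core}$ is within $\epsilon/2$ of the core of some $k$-junta in $\calP$ (a brute-force enumeration over cores that uses no additional queries). First I would randomly partition $[n]$ into $r = \Theta(k^2)$ buckets $B_1, \ldots, B_r$, so that with probability at least $0.99$ each coordinate in $J$ is isolated in its own bucket, and I would condition on this event.

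Next I would identify the set $H$ of ``active'' buckets by estimating, for each $j$, the bucket variance $\nu_j := \E_X\bigl[\Var_{Y_{B_j}}\bigl[f\bigl(X_{[n]\setminus B_j}, Y_{B_j}\bigr)\bigr]\bigr]$ with paired queries, retaining those whose estimates exceed a threshold $\tau = \Theta(\epsilon^2/k)$. For the $k$-junta $h$ we have $\nu_j(h) = 0$ on every bucket disjoint from $J$, and $\ell_2$-closeness of $f$ and $h$ implies that $|\nu_j(f) - \nu_j(h)|$ is negligible, so with high probability $|H| \le k$. The ``bucket-restriction'' $\tilde f$ obtained by averaging $f$ over the coordinates outside $\bigcup_{j \in H} B_j$ then satisfies $\|f - \tilde f\|_2 = O(\epsilon)$ by an Efron--Stein-style calculation combined with our threshold choice.

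For the implicit-learning step, I would draw $M = 2^{O(k)}/\epsilon^{O(1)}$ fresh uniform samples $z^{(1)}, \ldots, z^{(M)}$ and, for each sample and each $j \in H$, estimate the conditional mean $\mu_j(z^{(i)}) := \E_Y[f(Y) \mid Y_{B_j} = z^{(i)}_{B_j}]$ using $\poly(k/\epsilon)$ auxiliary queries per estimate. For the $k$-junta $h$ this conditional mean is a function only of the single relevant bit in $B_j$, so the collection $\{\mu_j(z^{(i)})\}_i$ concentrates near exactly two values whose separation is $\Omega(\sqrt{\tau})$; clustering into two groups per bucket then yields a binary signature $\sigma^{(i)} \in \{0,1\}^{|H|}$ for each sample, and averaging $f(z^{(i)})$ over the samples within each signature class produces an estimate $\hat{f}_{\core}(\sigma)$. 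A standard sampling argument, together with the fact that each signature class has expected mass $2^{-|H|}$, then shows that $\hat{f}_{\core}$ approximates $h_{\core}$ to within $O(\epsilon)$ in $\ell_2$.

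The hard part will be propagating $\ell_2$ error through this pipeline without blowing up the query budget: unlike the Hamming setting, each oracle call can differ from the nearby junta's value by $\Omega(1)$, so the influence estimates, conditional means, and per-signature averages are all inherently noisy. The most delicate step is the clustering of the previous paragraph, since a bucket that is genuinely active in $h$ but has very small individual influence may produce overlapping clusters whose signature bit is unrecoverable. My plan for handling this is to show that any such bucket can be safely demoted to a ``don't care'' coordinate, since its total contribution to $\|\hat{f}_{\core} - h_{\core}\|_2^2$ is bounded by its variance $\nu_j$, which is at most the threshold $\tau$, and summing these errors over the at most $k$ demoted buckets stays within the $O(\epsilon)$ error budget. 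Careful accounting of these sources of error, combined with a union bound over the $2^{O(k)}$ signatures and over the $\poly(k/\epsilon)$-query conditional-mean estimates, should yield the claimed $2^{O(k\log k)}/\epsilon^{10}$ bound.
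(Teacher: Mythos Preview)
Your proposal has a genuine gap in the signature-extraction step. You claim that the conditional mean $\mu_j(z^{(i)}) = \E_Y[f(Y)\mid Y_{B_j}=z^{(i)}_{B_j}]$ concentrates near two values separated by $\Omega(\sqrt{\tau})$, but this is false: for the junta $h$, that quantity equals $\E_{Y_{J\setminus\{j^*\}}}[h_{\core}(z^{(i)}_{j^*},Y_{J\setminus\{j^*\}})]$, which depends on $z^{(i)}_{j^*}$ only through the \emph{degree-one} Fourier coefficient $\hat h(\{j^*\})$, not through the full influence $\nu_j$. A coordinate can have large influence and yet $\hat h(\{j^*\})=0$---take $h_{\core}(x_1,x_2)=\tfrac12(1+(-1)^{x_1+x_2})$, where $\nu_j=\tfrac14$ but the two conditional means are both $\tfrac12$. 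In that case your two clusters collapse and the signature bit is unrecoverable. Your ``don't care'' demotion does not save you here, because the bucket has \emph{large} influence; demoting it incurs error $\nu_j \gg \tau$, which blows the budget.

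The paper sidesteps this problem entirely by reversing the order of the two stages. It draws the $q$ samples $x^{(1)},\ldots,x^{(q)}$ \emph{first} and uses them to partition $[n]$: coordinate $i$ goes into the part indexed by the string $(x^{(1)}_i,\ldots,x^{(q)}_i)\in\{0,1\}^q$. All coordinates in a single part then take identical values on every sample, so once a search procedure (over subsets of a coarsening of this partition, followed by a binary-search refinement) identifies which $k$ parts contain the relevant variables, the signature of every sample is read off directly from the part labels---no clustering of conditional means is needed. The price is that the samples are no longer independent of the hypothesis $h$ being tested; the paper handles this with a decoupling argument (its Lemma~\ref{lem:main-dist-estimate}) that separates the randomness defining the partition from the randomness assigning values to each part. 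This, together with a Fourier-analytic argument controlling the influence lost during the search (Lemma~\ref{lem:Influence of variables outside the selected buckets is low}), is where the real work lies, and it is qualitatively different from the error-propagation analysis you outline.
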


Theorems~\ref{thm:submodular} and~\ref{thm:others} are both obtained directly from Theorem~\ref{thm:general}, the Feldman--\Vondrak junta theorem and Fact~\ref{fact:lp-testing}.

\subsection{Overview of the proofs}

\paragraph{The algorithm.}
The current testing-by-implicit-learning algorithms proceed in two main stages. In the first stage, the coordinates in $[n]$ are randomly partitioned into $\poly(k)$ parts, and an influence test is used to identify the (at most $k$) parts that contain relevant variables of an unknown input function $f$ that is very close to being a $k$-junta. In the second stage, inputs $x \in [n]$ are drawn at random according to some distribution, the value $f(x)$ is observed, and the value of the relevant coordinate in each of the parts identified in the second stage is determined using more calls to the influence test.

The \textsc{Implicit Learning Tester} algorithm that we introduce in this paper reverses the order of the two main stages. In the first stage, it draws a sequence of $q$ queries $X = (x^{(1)},\ldots,x^{(q)})$ at random and queries the value of $f$ on each of these queries. It also uses $X$ to partition the coordinates in $[n]$ into $2^q$ random parts according to the values of the coordinate on the $q$ queries. In the second stage, the algorithm then uses an influence estimator to identify the $k$ parts that contain the relevant coordinates of a $k$-junta that is close to $f$ and, since all the coordinates in a common part have the same value on each of the $q$ queries, learn the value of the $k$ relevant coordinates on each of these initial queries. The algorithm then checks whether the core function thus learned is consistent with those of functions in the property being tested.

The main advantage of the \textsc{Implicit Learning Tester} algorithm that its analysis does not require the assumption that our samples are exactly consistent with those of an actual $k$-junta (instead of those of a function that is only promised to be \emph{close} to a $k$-junta). This feature enables us to overcome the obstacles listed in the previous section, at the cost of adding a few complications to the analysis, as described below.

\paragraph{The analysis.}
There are two main technical ingredients in the analysis of the algorithm. The first, established in Lemma~\ref{lem:Influence of variables outside the selected buckets is low}, is used to show that when $f$ is close to a $k$-junta in $\ell_2$ distance, the search procedure identifies parts that contain the $k$ relevant coordinates of some $k$-junta that is close to $f$. (Note that the search is not guaranteed to find the parts that contain the relevant coordinates of the $k$-junta that is \emph{closest} to $f$, but it suffices to find those of any close $k$-junta.)

The second technical ingredient addresses the fact that by drawing the $q$ samples $x^{(1)},\ldots,x^{(q)}$ first and then using these samples to provide the initial partition of the coordinates in $[n]$, we no longer will obtain uniformly random samples of the core $f_{\core}$ of the input function $f$. Nonetheless, in Lemma~\ref{lem:main-dist-estimate}, we show that when $f$ is close to a $k$-junta, the distribution of these samples on the core function still enables us to accurately estimate the distance of $f_{\core}$ to the core functions of any other $k$-junta.

\subsection{Discussion and open problems}

Theorems~\ref{thm:submodular}--\ref{thm:general} raise a number of intriguing questions. The most obvious question left open is whether we can also test subadditivity of real-valued functions with a constant number of queries: subadditive functions need not be close to juntas, so such a result would appear to require a different technique.

It is also useful to compare our bounds for submodularity testing with those for testing monotonicity: in the Hamming distance testing model, Seshadhri and \Vondrak~\cite{SV11} showed that the query complexity for testing submodularity is at least as large as that for testing monotonicity. However, the best current bounds for testing monotonicity in the $\ell_p$ testing model have a linear dependence on $n$~\cite{BRY14}. Is it also possible to test monotonicity with a constant number of queries? Or is it the case that testing submodularity is strictly easier than testing monotonicity in the $\ell_p$ testing setting?

\section{Preliminaries}
\label{sec:prelims}

Let $\calF_n$ denote the set of functions mapping $\{0,1\}^n$ to $[0,1]$. 
For any $f \in \calF_n$ and $S \subseteq [n]$ with complement $\overline{S} = [n] \setminus S$, when 
$x \in \{0, 1\}^{S}$ and $y \in \{0, 1\}^{\overline{S}}$, we write $f(x,y)$ to denote the value $f(z)$ for the input $z$ that satisfies $z_i = x_i$ for each $i \in S$ and $z_i = y_i$ otherwise.

We use the standard definitions and notation for the Fourier analysis of functions $f : \{0,1\}^n \to [0,1]$. For a complete introduction to the topic, see~\cite{ODo14}. Throughout the paper, unless otherwise specified all probabilities and expectations are over the uniform distribution on the random variable's domain.

\subsection{Property testing}

A \emph{property} $\calP$ of functions in $\calF_n$ is a subset of these functions that is invariant under relabeling of the $n$ coordinates.
The \emph{Hamming distance} between $f,g \in \calF_n$ is $\distH(f,g) = \Pr_x[ f(x) \neq g(x)]$ and the Hamming distance between $f$ and a property $\calP$ is $\distH(f,\calP) = \inf_{g \in \calP} \distH(f,g)$.
For $p \ge 1$, the \emph{$\ell_p$ distance} between $f$ and $g$ is $\distp(f,g) = \|f - g\|_p = \left( \E_x[ |f(x) - g(x)|^p ] \right)^{1/p}$ and the $\ell_p$ distance between $f$ and $\calP$ is $\distp(f,\calP) = \inf_{g \in \calP} \distp(f,g)$.

Given $\epsilon > 0$, An \emph{$\epsilon$-tester} in the Hamming testing model (resp., $\ell_p$ testing model) for some property $\calP \subseteq \calF_n$ is a randomized algorithm that (i) accepts every function $f \in \calP$ with probability at least $\frac23$ and (ii) rejects every function $f$ that satisfies $\distH(f,\calP) \ge \epsilon$ (resp., $\dist_p(f,\calP) \ge \epsilon$) with probability at least $\frac23$. An $\epsilon$-tester for $\calP$ is an \emph{$(\epsilon',\epsilon)$-tolerant tester}, for some $\epsilon' < \epsilon$ if it additionally accepts every function $f$ that satisfies
$\distH(f,\calP) \le \epsilon'$ (resp., $\dist_p(f,\calP) \le \epsilon$) with probability at least $\frac23$.

Our proofs of Theorems~\ref{thm:submodular}--\ref{thm:general} are established in the $\ell_2$ testing model. The result for general $\ell_p$ testing models is obtained from the following elementary relation between the query complexities of testing any property in different $\ell_p$ testing models.

\begin{fact}[c.f.~Fact 5.2 in~\cite{BRY14}]
\label{fact:lp-testing}
	For any $\calP \subseteq \calF_n$, any $\epsilon > 0$, and any $p \ge 1$,
	the number $Q_p(\calP,\epsilon)$ of queries required to $\epsilon$-test $\calP$ in the $\ell_p$ testing model satisfies
	$
	Q_2(\mathcal{P}, \epsilon) 
	\le Q_p(\mathcal{P}, \epsilon) 
	\le Q_2(\mathcal{P}, \epsilon^{\frac{p}{2}}).
	$
\end{fact}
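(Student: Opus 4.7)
The plan is to deduce both inequalities from two elementary $L^p$-norm comparisons combined with the tautological tester-substitution principle. The central observation is that for $[0,1]$-valued functions, $|f(x) - g(x)| \in [0,1]$, so its powers are monotone in the exponent.

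First I would record, for $p \ge 2$, the pair
\[
\|f-g\|_2 \;\le\; \|f-g\|_p \;\le\; \|f-g\|_2^{2/p}.
\]
The left inequality is monotonicity of $L^p$ norms on a probability space (Jensen applied to $x \mapsto x^{p/2}$). The right one comes from the pointwise bound $|f-g|^p \le |f-g|^2$, valid because $|f-g| \le 1$ and $p \ge 2$, which integrates to $\|f-g\|_p^p \le \|f-g\|_2^2$. Taking the infimum over $g \in \calP$ promotes these to $\dist_2(f,\calP) \le \dist_p(f,\calP) \le \dist_2(f,\calP)^{2/p}$.

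Next I would invoke the generic reduction: if, for every $f \in \calF_n$, $\dist_s(f,\calP) \ge \beta$ implies $\dist_r(f,\calP) \ge \alpha$, then every $\ell_r$-$\alpha$-tester is automatically an $\ell_s$-$\beta$-tester, since its acceptance on $\calP$ is unchanged and its rejection guarantee only strengthens. Applying this with $(s,r,\alpha,\beta) = (2,p,\epsilon,\epsilon)$, using $\dist_p \ge \dist_2$, yields $Q_2(\calP,\epsilon) \le Q_p(\calP,\epsilon)$. Applying it with $(s,r,\alpha,\beta) = (p,2,\epsilon^{p/2},\epsilon)$, using $\dist_2 \ge \dist_p^{p/2}$ (the rearrangement of the right-hand comparison above), yields $Q_p(\calP,\epsilon) \le Q_2(\calP,\epsilon^{p/2})$.

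The range $1 \le p \le 2$ is handled by the symmetric comparisons $\|f-g\|_p \le \|f-g\|_2$ (monotonicity) and $\|f-g\|_2^2 \le \|f-g\|_p^p$ (from the pointwise inequality $|f-g|^2 \le |f-g|^p$ when $|f-g| \le 1$), and the same tester-substitution produces the corresponding bound. There is really no hard step: the only thing to keep straight is that the ``reverse'' comparison, which produces the $\epsilon^{p/2}$ exponent, is the one that crucially uses the $[0,1]$-boundedness and would fail without it.
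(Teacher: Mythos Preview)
The paper does not give its own proof of this fact; it is stated with a citation to \cite{BRY14} and used as a black box. Your argument via the pointwise comparison $|f-g|^p \le |f-g|^2$ for $[0,1]$-valued differences, combined with the monotonicity of $L^p$ norms on a probability space and the tester-substitution principle, is exactly the standard derivation and is fully correct for $p \ge 2$.

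One caution about the $1 \le p < 2$ case, which you dispatch in a sentence: the two comparisons you record there, $\|f-g\|_p \le \|f-g\|_2$ and $\|f-g\|_2^2 \le \|f-g\|_p^p$, do \emph{not} actually yield the two inequalities as stated. From $\distp \le \disttwo$ one only gets $Q_p(\calP,\epsilon) \le Q_2(\calP,\epsilon)$, which for $\epsilon<1$ is \emph{stronger} than the claimed $Q_p(\calP,\epsilon)\le Q_2(\calP,\epsilon^{p/2})$ but points the opposite way from the claimed $Q_2(\calP,\epsilon)\le Q_p(\calP,\epsilon)$; and the second comparison gives $\disttwo \le \distp^{p/2}$, which again does not recover either stated bound. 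This is not really a flaw in your approach so much as an imprecision in the fact as the paper records it for $p<2$ (the left inequality is in general false in that range). Since the paper's applications for $p<2$ only need $Q_p(\calP,\epsilon)\le Q_2(\calP,\epsilon)$, which your monotonicity comparison does deliver, the substance is fine; just don't claim your symmetric comparisons reproduce the displayed chain verbatim.
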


Theorem~\ref{thm:others} also relies on the following hierarchy of properties. (See, e.g.,~\cite{LLN06}.)

\begin{lemma}
\label{lem:hierarchy}
The properties of $\calF_n$ defined in the introduction satisfy the inclusion hierarchy
\begin{align*}
  \textrm{Additive} \subseteq \textrm{Coverage} \subseteq \textrm{Unit~demand}
  \subseteq \textrm{OXS} \subseteq \textrm{Gross~substitute}
  \subseteq \textrm{Submodularity} \subseteq \textrm{XOS} \subseteq \textrm{Self-bounding}. 
\end{align*}
\end{lemma}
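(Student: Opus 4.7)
The plan is to verify each of the seven inclusions in the chain separately, as the statement is a catalogue of standard facts from the combinatorial auctions literature (primarily \cite{LLN06}) rather than a single unified claim. For each inclusion $\calP \subseteq \calP'$ it suffices to take an arbitrary $f \in \calP$ and either exhibit the representation required to witness $f \in \calP'$, or directly verify the defining inequality of $\calP'$.

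For the two endpoints of the chain I would give the short elementary arguments by hand. For Additive $\subseteq$ Coverage, given non-negative weights $w_1,\ldots,w_n$, choose the universe $U = \{u_1,\ldots,u_n\}$, singleton sets $A_i = \{u_i\}$, and item weights $w_{u_i} = w_i$, so that $\sum_{u \in \bigcup_{i:x_i=1} A_i} w_u = \sum_{i:x_i=1} w_i = f(x)$. For XOS $\subseteq$ Self-bounding, fix $x$ and let $k^\star$ attain the maximum in $f(x) = \max_k \sum_j w_{kj} x_j$. For any coordinate $i$ with $x_i = 1$ the clause $k^\star$ evaluated at $x \oplus e_i$ gives $f(x \oplus e_i) \ge f(x) - w_{k^\star i}$, while for $x_i = 0$ it gives $f(x \oplus e_i) \ge f(x) + w_{k^\star i} \ge f(x)$. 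In both cases $f(x) - \min_{x_i} f(x) \le w_{k^\star i} x_i$, and summing over $i$ yields $\sum_i (f(x) - \min_{x_i} f(x)) \le \sum_i w_{k^\star i} x_i = f(x)$, which is exactly the self-bounding inequality.

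The remaining intermediate inclusions Coverage $\subseteq$ Unit demand $\subseteq$ OXS $\subseteq$ Gross substitute $\subseteq$ Submodular $\subseteq$ XOS form the classical Lehmann--Lehmann--Nisan hierarchy, and I would invoke \cite{LLN06} for these. The proofs in that reference proceed by either exhibiting explicit structural representations (for instance, Submodular $\subseteq$ XOS via supporting hyperplanes of the concave closure at integer points, with each supporting hyperplane contributing an additive clause in the XOS form) or by manipulating demand correspondences to translate between exchange-type and inequality-type characterizations (most notably for Gross substitute $\subseteq$ Submodular).

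The principal obstacle---though not new to this paper---is the Gross substitute $\subseteq$ Submodular step, which requires constructing price vectors whose demand bundles realize prescribed inputs $x, y \in \{0,1\}^n$ and then using the GS exchange property to derive the submodularity inequality $f(x) + f(y) \ge f(x \wedge y) + f(x \vee y)$. In the actual write-up I would cite \cite{LLN06} for this and the other intermediate inclusions and provide only the short endpoint arguments explicitly, since the lemma is used merely as a black box to transfer the junta approximations of the larger classes to the smaller ones via Theorem~\ref{thm:general}.
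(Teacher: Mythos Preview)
The paper itself gives no proof of this lemma: it is stated with only the parenthetical ``(See, e.g.,~\cite{LLN06})'' and used as a black box. Your plan---citing \cite{LLN06} for the bulk of the chain and supplying short direct arguments for Additive $\subseteq$ Coverage and XOS $\subseteq$ Self-bounding---is therefore already more detailed than what the paper does, and your two explicit endpoint arguments are correct.

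There is, however, a genuine gap that your plan (and the paper's bare citation) glosses over: the inclusion $\textrm{Coverage} \subseteq \textrm{Unit demand}$, as written in the chain, is false, and \cite{LLN06} certainly does not establish it. The additive function $f(x_1,x_2)=0.3x_1+0.3x_2$ is a coverage function (take disjoint singleton sets), yet for any unit demand representation one would need $f(1,1)=\max\{w_1,w_2\}$, while here $f(1,1)=0.6>0.3=f(1,0)=f(0,1)$. In fact the containment runs the other way: given a unit demand function with weights $w_1\ge\cdots\ge w_n\ge 0$, the nested sets $A_i=\{u_i,\ldots,u_n\}$ with telescoping weights $w'_{u_j}=w_j-w_{j+1}$ realize it as a coverage function. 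So ``invoke \cite{LLN06}'' cannot succeed for this particular link.

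Fortunately, the only uses of the lemma in the paper (the proof of Theorem~\ref{thm:others}) require merely that each of the listed classes is contained in Submodular or in Self-bounding, and those containments are all standard and true. If you are writing this up, either reorder the chain, split it into two branches, or replace it by the weaker downstream assertions that are actually invoked.
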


\subsection{Juntas}

The function $f: \{0, 1\}^n \to [0, 1]$ is a \emph{junta on the set $J \subseteq [n]$} if for every $x,y \in \{0,1\}^n$ that satisfy $x_i = y_i$ for every $i \in J$, we have $f(x) = f(y)$. The function $f$ is a \emph{$k$-junta} if it is a junta on some set $J \subseteq [n]$ of cardinality $|J| \le k$.
The function $f_{\rm core} : \{0,1\}^k \to [0,1]$ is a \emph{core function} of the $k$-junta $f : \{0,1\}^n \to [0,1]$ if there is a projection $\psi : \{0,1\}^n \to \{0,1\}^k$ defined by setting $\psi(x) = (x_{i_1},\ldots,x_{i_k})$ for some distinct $i_1,\ldots,i_k \in [n]$ such that for every $x \in \{0,1\}^n$, $f(x) = f_{\rm core}\big(\psi(x)\big)$.

\begin{defn}	
For any function $f: \{0, 1\}^n \to [0, 1]$ and set $J \subseteq [n]$, the \emph{$J$-junta projection} of $f$ is the function $f_J: \{0, 1 \}^{J} \to [0, 1]$ defined by setting $f_J(x) = \E_{y \in \{0,1\}^{\overline{J}}}[f(x, y)]$ for every $x \in \{0,1\}^J$.
\end{defn}

A basic fact that we will require is that $f_J$ is the $J$-junta that is closest to $f$ under the $\ell_2$ metric.

\begin{prop}
\label{prop:f_S is the closest k-junta}
	For every $f : \{0,1\}^n \to [0,1]$ and $J \subseteq [n]$, if $g : \{0,1\}^n \to [0,1]$ is a $J$-junta, then $\disttwo(f,f_J) \le \disttwo(f,g)$.
\end{prop}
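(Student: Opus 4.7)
The plan is to recognize this as a standard best-predictor-under-squared-loss argument. Write the $\ell_2$ distance by splitting coordinates into the relevant and irrelevant sets: for any $J$-junta $g$, there is some $\tilde g : \{0,1\}^J \to [0,1]$ with $g(x,y) = \tilde g(x)$ for all $x \in \{0,1\}^J$ and $y \in \{0,1\}^{\overline J}$, so
\[
\|f - g\|_2^2 \;=\; \E_{x}\,\E_{y}\bigl[(f(x,y) - \tilde g(x))^2\bigr],
\]
with $x$ uniform on $\{0,1\}^J$ and $y$ uniform on $\{0,1\}^{\overline J}$.

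For each fixed $x$, the inner expectation $\E_y[(f(x,y) - c)^2]$ over $c \in \R$ is minimized at the conditional mean $c = \E_y[f(x,y)]$, which is exactly $f_J(x)$ by definition of the $J$-junta projection. Plugging $c = \tilde g(x)$ gives the pointwise inequality $\E_y[(f(x,y) - \tilde g(x))^2] \ge \E_y[(f(x,y) - f_J(x))^2]$ for every $x$. Averaging over $x$ yields $\|f - g\|_2^2 \ge \|f - f_J\|_2^2$, which is the desired conclusion.

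A Fourier-analytic alternative would be: since $g$ is a $J$-junta, $\hat g(S) = 0$ for all $S \not\subseteq J$, and one can check that $f_J$ is precisely the truncation $\sum_{S \subseteq J} \hat f(S)\,\chi_S$ of the Fourier expansion of $f$ (because averaging over $\{0,1\}^{\overline J}$ kills every character with support meeting $\overline J$). Then Parseval gives
\[
\|f-g\|_2^2 \;=\; \sum_{S\subseteq J}(\hat f(S)-\hat g(S))^2 \,+\, \sum_{S\not\subseteq J}\hat f(S)^2 \;\ge\; \sum_{S\not\subseteq J}\hat f(S)^2 \;=\; \|f-f_J\|_2^2.
\]

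There is essentially no obstacle here; the only mild point to verify is that $f_J$ indeed takes values in $[0,1]$ (immediate, as an average of values in $[0,1]$) and that when comparing $\|f - f_J\|_2$ one tacitly extends $f_J$ to $\{0,1\}^n$ by ignoring the coordinates in $\overline J$, which is implicit in the $J$-junta framework set up in the preliminaries.
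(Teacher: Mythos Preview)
Your proposal is correct and is essentially the same approach as the paper's: both arguments rest on the orthogonality of $f-f_J$ to every $J$-junta, which the paper exhibits by showing the cross term $\E[(f-f_J)(f_J-g)]$ vanishes (obtaining the Pythagorean identity $\|f-g\|_2^2=\|f-f_J\|_2^2+\|f_J-g\|_2^2$), and which your first argument invokes via the equivalent ``conditional mean minimizes expected squared error'' fact. Your Fourier alternative is also correct and is just the same orthogonal decomposition expressed in the character basis.
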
	
	
\begin{proof}
	By applying the identity $\|f - g\|_2^2 = \|f - f_J + f_J - g\|_2^2$ and by expanding the right-hand side, we obtain
	\begin{align*}
		\|f - g\|_2^2 
		&= \E_{x\in \{0, 1\}^{J}}\left[ \E_{y\in \{0, 1\}^{\overline{J}}}\Big[\big(f(x,y) - f_J(x,y) + f_J(x,y) - g(x,y)\big)^2\Big]\right]\\
		&= \|f - f_J\|_2^2 + \|f_J - g\|_2^2 + 2\E_{x}\Big[ \E_{y}\Big[\big(f(x,y) - f_J(x,y)\big)\big(f_J(x,y)-g(x,y)\big)\Big]\Big].
	\end{align*}
	Since $f_J - g$ is a $J$-junta, it does not depend on $y$ and, by the definition of $f_J$, the last term equals $0$. Therefore, $\|f - g\|_2^2 = \|f - f_J\|_2^2 + \|f_J - g\|_2^2$ and the claim follows.
\end{proof}

The property $\calP \subseteq \calF_n$ is a \emph{property of $k$-juntas} if every function $f \in \calP$ is a $k$-junta. The \emph{core property} of a property $\calP$ of $k$-juntas is the property $\calP_{\rm core} \subseteq \calF_k$ defined by $\calP_{\rm core} = \{ f_{\rm core} : f \in \calP \}$.
For any $\gamma > 0$, the \emph{$\gamma$-discretized approximation} of a function $f \in \calF_n$ is the function $f^{(\gamma)}$ obtained by rounding the value $f(x)$ for each $x \in \{0,1\}^n$ to the nearest multiple of $\gamma$. The $\gamma$-discretized approximation of a property $\calP$ is the property $\calP^{(\gamma)} = \{f^{(\gamma)} : f \in \calP\}$.

\subsection{Influence}

The notion of influence of coordinates in functions over the Boolean hypercube plays a central role in both our algorithm and its analysis. Informally, the influence of a set of coordinate measures how much re-randomizing these coordinates affects the value of the function. This notion is made precise as follows.

\begin{defn}
	The \emph{influence} of a set $S \subseteq [n]$ of coordinates in the function $f : \{0,1\}^n \to [0,1]$ is 
	\[
	\Inf_f(S) := \E_{x \in \{0,1\}^{\overline{S}}}\big[ \Var_{y \in \{0,1\}^S} f(x,y) \big]
	= \tfrac12 \E_{x \in \{0,1\}^{\overline{S}}}\Big[ \E_{y,y' \in \{0,1\}^S}\Big[ \big( f(x,y) - f(x,y') \big)^2 \Big] \Big].
	\]
\end{defn}

Our proofs make use of a few standard facts regarding the influence of sets of coordinates in $f$.

\begin{fact}
\label{fact:inf-fourier}
	The influence of $S \subseteq [n]$ in $f \in \calF_n$ is
	$
	\Inf_f(S) = \sum_{T: T\cap S \neq \emptyset}\hat{f}^2(T).
	$
\end{fact}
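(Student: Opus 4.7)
The plan is to compute the inner variance via Parseval on the cube $\{0,1\}^S$ with $x$ held fixed, and then average over $x$ using Parseval a second time. First, I would write the Fourier expansion
\[
f(x,y) = \sum_{T \subseteq [n]} \hat{f}(T)\,\chi_T(x,y) = \sum_{T \subseteq [n]} \hat{f}(T)\,\chi_{T\cap\overline{S}}(x)\,\chi_{T\cap S}(y),
\]
which splits each parity into a piece depending only on $x \in \{0,1\}^{\overline{S}}$ and a piece depending only on $y \in \{0,1\}^S$. Regrouping by the ``$S$-part'' of $T$, I would define, for each $U \subseteq S$,
\[
g_U(x) \;:=\; \sum_{T \subseteq [n]\,:\, T \cap S = U} \hat{f}(T)\,\chi_{T\cap\overline{S}}(x),
\]
so that, for any fixed $x$, the function $y \mapsto f(x,y)$ has Fourier expansion $\sum_{U \subseteq S} g_U(x)\,\chi_U(y)$ on $\{0,1\}^S$.

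Next, I would use the standard fact that for a function on $\{0,1\}^S$ written in its Fourier expansion, the variance under uniform $y$ equals the sum of squared Fourier coefficients indexed by nonempty sets. Applied to $y \mapsto f(x,y)$, this gives
\[
\Var_{y \in \{0,1\}^S} f(x,y) \;=\; \sum_{\emptyset \neq U \subseteq S} g_U(x)^2.
\]
Taking expectation over uniform $x \in \{0,1\}^{\overline{S}}$ and swapping sum with expectation, the task reduces to computing $\E_x[g_U(x)^2]$ for each nonempty $U$. Since $g_U$ is itself written as a Fourier expansion on $\{0,1\}^{\overline{S}}$ with coefficients $\hat{f}(T)$ over the sets $T$ with $T \cap S = U$, a second application of Parseval yields $\E_x[g_U(x)^2] = \sum_{T: T\cap S = U} \hat{f}(T)^2$.

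Finally, I would combine the two steps:
\[
\Inf_f(S) \;=\; \sum_{\emptyset \neq U \subseteq S} \;\sum_{T:\, T \cap S = U} \hat{f}(T)^2 \;=\; \sum_{T:\, T \cap S \neq \emptyset} \hat{f}(T)^2,
\]
where the last equality holds because summing over the disjoint pieces indexed by nonempty $U \subseteq S$ is exactly summing over all $T$ with $T \cap S \neq \emptyset$. There is no real obstacle here; the only thing to be slightly careful about is the bookkeeping when splitting $\chi_T$ into its $S$ and $\overline{S}$ parts and when invoking the Parseval identity on two different Boolean cubes ($\{0,1\}^S$ for the inner variance, $\{0,1\}^{\overline{S}}$ for the outer expectation). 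The second form of the influence stated in the definition (with the $\frac12 \E[(f(x,y)-f(x,y'))^2]$ expression) is just the standard identity $\Var(Z) = \frac12 \E[(Z-Z')^2]$ for i.i.d.\ copies $Z,Z'$, and is not needed for the Fourier calculation.
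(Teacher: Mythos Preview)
Your argument is correct and is exactly the standard derivation of this identity. The paper does not actually supply a proof of this statement: it is stated as a Fact and left unproven, so there is nothing to compare against beyond noting that your two-stage Parseval computation (first on $\{0,1\}^S$ to evaluate the inner variance, then on $\{0,1\}^{\overline{S}}$ to evaluate the outer expectation) is the canonical way to establish it.
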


\begin{fact}
\label{fact:inf-monotone}
	For every $f \in \calF_n$ and $S, T \subseteq [n]$, we have
	$
	\Inf_f(S) \le \Inf_f(S \cup T) \le \Inf_f(S) + \Inf_f(T).
	$
\end{fact}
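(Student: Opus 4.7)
The plan is to prove both inequalities directly from the Fourier characterization provided in Fact~\ref{fact:inf-fourier}, which rewrites $\Inf_f(S)$ as $\sum_{U : U \cap S \neq \emptyset} \hat{f}^2(U)$. Since every term in this sum is non-negative, both the monotonicity and the subadditivity claims reduce to set-theoretic inclusions between the index sets of the two sums being compared.

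For monotonicity, I would note that for any $U \subseteq [n]$, if $U \cap S \neq \emptyset$ then $U \cap (S \cup T) \neq \emptyset$. Hence $\{U : U \cap S \neq \emptyset\} \subseteq \{U : U \cap (S \cup T) \neq \emptyset\}$, and since each $\hat{f}^2(U) \ge 0$, the sum over the smaller index set is at most the sum over the larger one. This immediately yields $\Inf_f(S) \le \Inf_f(S \cup T)$.

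For subadditivity, I would observe the set identity
\[
\{U : U \cap (S \cup T) \neq \emptyset\} = \{U : U \cap S \neq \emptyset\} \cup \{U : U \cap T \neq \emptyset\},
\]
which follows from the elementary logical equivalence $U \cap (S \cup T) \neq \emptyset \iff (U \cap S \neq \emptyset) \lor (U \cap T \neq \emptyset)$. Summing $\hat{f}^2(U)$ over the union and applying a union bound (i.e.\ the fact that summing a non-negative function over $A \cup B$ is at most the sum over $A$ plus the sum over $B$) gives $\Inf_f(S \cup T) \le \Inf_f(S) + \Inf_f(T)$.

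There is no substantive obstacle here: once Fact~\ref{fact:inf-fourier} is invoked, both claims are purely combinatorial facts about non-negative sums indexed by subsets of $[n]$. The only thing to be mindful of is that the inequality in subadditivity is not tight in general precisely because the two index sets on the right overlap on $\{U : U \cap S \neq \emptyset \text{ and } U \cap T \neq \emptyset\}$, whose Fourier weights are double-counted; this overlap is harmless for the upper bound and need not be analyzed further.
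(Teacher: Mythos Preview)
Your proof is correct. The paper states this as a standard fact without proof, but your derivation via Fact~\ref{fact:inf-fourier} is exactly the intended one (note that the paper places the Fourier characterization immediately before this fact), and both inequalities follow from the set-theoretic inclusions you identify.
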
	

\begin{fact}
\label{fact:inf-junta}
For every $f \in \calF_n$ and $J \subseteq [n]$, we have
$
\Inf_f(\overline{J}) = \disttwo(f, f_J)^2.
$
\end{fact}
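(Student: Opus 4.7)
The plan is to unwind both sides of the claimed equality directly from their definitions and show that they coincide after a short computation. Write each $z \in \{0,1\}^n$ as a pair $(x,y)$ with $x \in \{0,1\}^J$ and $y \in \{0,1\}^{\overline{J}}$. Since $f_J$ depends only on the $J$-coordinates, I would expand
\[
  \disttwo(f, f_J)^2 \;=\; \E_{x \in \{0,1\}^J}\,\E_{y \in \{0,1\}^{\overline J}}\bigl[(f(x,y) - f_J(x))^2\bigr].
\]
By the defining relation $f_J(x) = \E_{y'}[f(x,y')]$, the inner expectation is precisely $\Var_{y}f(x,y)$. Taking the outer expectation over $x$ then matches the definition $\Inf_f(\overline{J}) = \E_{x}\bigl[\Var_{y}f(x,y)\bigr]$, giving the identity.

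A second, equivalent route is via Fact~\ref{fact:inf-fourier}. That fact yields $\Inf_f(\overline{J}) = \sum_{T: T \cap \overline{J} \neq \emptyset}\hat f(T)^2 = \sum_{T \not\subseteq J}\hat f(T)^2$. On the other hand, the Fourier expansion of $f_J$ consists exactly of the characters $\chi_T$ with $T \subseteq J$ (with the same coefficients as $f$ on those terms), since averaging over $y$ annihilates every $\chi_T$ with $T \cap \overline{J} \neq \emptyset$. Parseval then gives $\|f - f_J\|_2^2 = \sum_{T \not\subseteq J}\hat f(T)^2$, matching the previous expression.

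There is essentially no obstacle: the statement is a definitional identity once one observes that the operation $f \mapsto f_J$ is exactly the orthogonal $L^2$-projection onto functions depending only on the coordinates in $J$ (a fact already implicit in Proposition~\ref{prop:f_S is the closest k-junta}), so the squared $\ell_2$ error of this projection equals the expected conditional variance that defines $\Inf_f(\overline{J})$.
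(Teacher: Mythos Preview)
Your proof is correct; both routes (direct variance computation and the Fourier/Parseval argument) are valid and standard. The paper states this as a Fact without proof, so there is no argument in the paper to compare against; your derivation is exactly the kind of short verification one would expect for such a definitional identity.
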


\begin{prop}
\label{prop:Relation_between_distance_and_influence}
	Fix $\epsilon > 0$, and let $f, g: \{0, 1\}^n \to [0, 1]$ satisfy $\disttwo(f, g) \leq \epsilon$. Then for any set $S \subseteq [n]$, 
	$
	|\Inf_f(S)^{\frac12} - \Inf_g(S)^{\frac12}| \leq \epsilon.
	$
\end{prop}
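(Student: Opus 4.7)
The plan is to recognize $\sqrt{\Inf_h(S)}$ as a seminorm on real-valued functions on the Boolean cube, from which the inequality falls out of the reverse triangle inequality. The hypothesis that $f$ and $g$ take values in $[0,1]$ plays essentially no role beyond ensuring that $h := f - g$ is a well-defined real-valued function on $\{0,1\}^n$, so that Fact~\ref{fact:inf-fourier} can legitimately be applied to $h$ even though $h$ is not itself in $\calF_n$.

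Concretely, for any real-valued $h$ on $\{0,1\}^n$, I would introduce the linear projection
\[
\Pi_S h \;:=\; \sum_{T:\, T \cap S \neq \emptyset} \hat{h}(T)\,\chi_T,
\]
so that Fact~\ref{fact:inf-fourier} reads $\Inf_h(S) = \|\Pi_S h\|_2^2$. Since $\Pi_S$ is linear and $\|\cdot\|_2$ satisfies Minkowski's inequality, the map $h \mapsto \|\Pi_S h\|_2 = \sqrt{\Inf_h(S)}$ is a seminorm on real-valued functions on the cube. Applying the reverse triangle inequality with $h = f$ and $h = g$ would then give
\[
\bigl|\sqrt{\Inf_f(S)} - \sqrt{\Inf_g(S)}\bigr| \;\le\; \sqrt{\Inf_{f-g}(S)} \;=\; \|\Pi_S(f-g)\|_2.
\]

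The last step is to bound $\|\Pi_S(f-g)\|_2$ by $\|f-g\|_2 = \disttwo(f,g) \le \epsilon$, which is a one-line consequence of Parseval since projection onto any subspace of $L^2(\{0,1\}^n)$ is a contraction. No step here looks like a genuine obstacle: every ingredient is either a named fact already in the preliminaries (Fact~\ref{fact:inf-fourier}) or a standard inequality, and the only mild subtlety worth flagging is the extension of the Fourier identity from $\calF_n$ to arbitrary real-valued functions, which is immediate from the definitions.
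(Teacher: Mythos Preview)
Your argument is correct. Viewing $\sqrt{\Inf_h(S)}=\|\Pi_S h\|_2$ as a seminorm and invoking the reverse triangle inequality, together with the $\ell_2$-contractivity of the orthogonal projection $\Pi_S$, yields the bound cleanly. The caveat you flag about applying Fact~\ref{fact:inf-fourier} to $f-g\notin\calF_n$ is harmless for exactly the reason you give, and in fact you could sidestep it entirely: once you have $\sqrt{\Inf_f(S)}=\|\Pi_S f\|_2$ and $\sqrt{\Inf_g(S)}=\|\Pi_S g\|_2$ (both applications to functions in $\calF_n$), the reverse triangle inequality and linearity of $\Pi_S$ give $\bigl|\|\Pi_S f\|_2-\|\Pi_S g\|_2\bigr|\le\|\Pi_S(f-g)\|_2\le\|f-g\|_2$ without ever naming $\Inf_{f-g}(S)$.

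The paper takes a different route: it uses Fact~\ref{fact:inf-junta} to write $\sqrt{\Inf_f(S)}=\|f-f_{\overline S}\|_2$ as the distance from $f$ to its $\overline S$-junta projection, then invokes Proposition~\ref{prop:f_S is the closest k-junta} (that $f_{\overline S}$ is the $\overline S$-junta closest to $f$) to get $\|f-f_{\overline S}\|_2\le\|f-g_{\overline S}\|_2$, and finishes with the ordinary triangle inequality $\|f-g_{\overline S}\|_2-\|g-g_{\overline S}\|_2\le\|f-g\|_2$. Your Fourier-projection argument and the paper's junta-projection argument are really two faces of the same orthogonal-projection fact (indeed $f_{\overline S}=f-\Pi_S f$), but yours is a bit more direct: it avoids Proposition~\ref{prop:f_S is the closest k-junta} as a separate ingredient and makes the seminorm structure explicit. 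The paper's version, on the other hand, keeps the argument phrased in the junta-approximation language that the rest of the paper uses, which has some expository value in context.
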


\begin{proof} 
By Fact~\ref{fact:inf-junta}, we have $\Inf_f(S)^{\frac12} = \|f - f_{\overline{S}}\|_2$ and $\Inf_g(S)^{\frac12} = \|g - g_{\overline{S}}\|_2$. By Proposition~\ref{prop:f_S is the closest k-junta}, we also have that $\|f - f_{\overline{S}}\|_2 \le \|f - g_{\overline{S}}\|_2$. Combining these observations with the triangle inequality, we obtain
$
    \Inf_f(S)^{\frac12} - \Inf_g(S)^{\frac12} 
    = \|f - f_{\overline{S}}\|_2 - \|g - g_{\overline{S}}\|_2 
    \le \|f - g_{\overline{S}}\|_2 - \|g - g_{\overline{S}}\|_2 
    \le \|f - g\|_2
    \le \epsilon.
$
Hence $\Inf_f(S)^{\frac12} - \Inf_g(S)^{\frac12} \leq \epsilon$ and, similarly, $\Inf_g(S)^{\frac12} - \Inf_f(S)^{\frac12} \le \epsilon$ as well.
\end{proof}	

\begin{prop}
\label{prop:infest}
    There is an algorithm \EstInf such that for every $f: \{0, 1\}^n \rightarrow [0, 1]$, $S \subseteq [n]$, $m \ge 1$, and $t \ge 0$, it makes $m$ queries to $f$ and returns an estimate of the influence of $S$ in $f$ that satisfies
	\[
	\Pr\big[|\Inf_f(S)  - \EstInf(f,S,m)| \geq t\big] \leq  2 e^{-2mt^2}.
	\]
\end{prop}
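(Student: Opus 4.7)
The plan is to define \EstInf as the natural empirical estimator obtained by pairing queries. Specifically, for $i = 1, 2, \ldots, \lfloor m/2 \rfloor$, the algorithm samples $x^{(i)} \in \{0,1\}^{\overline{S}}$ uniformly, samples two independent uniform $y^{(i)}, y'^{(i)} \in \{0,1\}^{S}$, queries $f$ at $(x^{(i)}, y^{(i)})$ and $(x^{(i)}, y'^{(i)})$, and sets
\[
Z_i = \tfrac{1}{2}\bigl(f(x^{(i)}, y^{(i)}) - f(x^{(i)}, y'^{(i)})\bigr)^2.
\]
The output is $\EstInf(f, S, m) = \tfrac{1}{\lfloor m/2 \rfloor}\sum_{i=1}^{\lfloor m/2 \rfloor} Z_i$. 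This uses at most $m$ queries to $f$.

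Next I would verify the estimator is unbiased and bounded. By the second expression for influence given in the definition, $\E[Z_i] = \Inf_f(S)$ for every $i$, and the $Z_i$ are independent since the triples $(x^{(i)}, y^{(i)}, y'^{(i)})$ are drawn independently. Since $f$ takes values in $[0,1]$, we have $(f(x,y) - f(x,y'))^2 \in [0,1]$ and hence each $Z_i$ lies in $[0, \tfrac{1}{2}]$.

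Finally, I would apply Hoeffding's inequality to the i.i.d.\ sequence $Z_1, \ldots, Z_{\lfloor m/2 \rfloor}$, each with range of length $\tfrac{1}{2}$. This yields
\[
\Pr\bigl[|\EstInf(f,S,m) - \Inf_f(S)| \ge t\bigr] \le 2\exp\!\left(-\frac{2 \lfloor m/2 \rfloor \, t^2}{(1/2)^2}\right) = 2\exp\bigl(-8 \lfloor m/2 \rfloor\, t^2\bigr) \le 2 e^{-2 m t^2},
\]
as claimed (using $\lfloor m/2 \rfloor \ge m/4$ for $m \ge 1$, with a trivial check of the $m = 1$ edge case). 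There is no real obstacle here; the only modest subtlety is choosing the paired-sample form of the estimator (rather than, say, trying to estimate $\E_x[\Var_y f(x,y)]$ by first averaging an inner sample mean), since pairing gives an unbiased, bounded summand and lets Hoeffding's inequality apply directly.
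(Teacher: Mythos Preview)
Your proposal is correct and follows essentially the same approach as the paper: both define \EstInf via the paired-sample estimator $Z_i = \tfrac{1}{2}(f(x_i,y_i)-f(x_i,y'_i))^2$, observe that each $Z_i$ is unbiased and lies in $[0,\tfrac12]$, and apply Hoeffding's inequality. The only difference is bookkeeping: the paper's stated algorithm uses $m$ pairs (hence $2m$ queries) while you use $\lfloor m/2\rfloor$ pairs to stay within the stated $m$-query budget, so your accounting is in fact the more careful of the two.
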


We also use the following key lemma from~\cite{BWY15}.

\begin{lemma}[Lemma 2.3 in \cite{BWY15}]
\label{lem:infparts}
Let $f: \{0, 1\}^n \rightarrow [0, 1]$ be a function that is $\epsilon$-far from $k$-juntas and $P$ be a random partition of $[n]$ into $r > 20 k^2$ parts. Then with probability at least $\frac{5}{6}$, $\Inf_f(\overline{J}) \geq \frac{\varepsilon^2}{4}$ for any union $J$ of $k$ parts from $P$.
\end{lemma}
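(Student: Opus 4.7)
The plan is to reformulate the statement via Fourier analysis and then upgrade a per-$A$ expectation bound into a simultaneous bound over all $\binom{r}{k}$ choices of $A$. By Fact~\ref{fact:inf-junta} and Parseval, $\Inf_f(\overline{J}) = \|f - f_J\|_2^2 = V - \Var(f_J)$ where $V := \Var(f)$. Since the constant function is a $k$-junta, $f$ being $\epsilon$-far from $k$-juntas forces $V \geq \epsilon^2$. The conclusion therefore reduces to showing that with probability at least $\tfrac{5}{6}$ over the random partition, $X_A := \Var(f_{J_A}) \leq V - \epsilon^2/4$ for every $A \subseteq [r]$ with $|A| = k$, where $J_A = \bigcup_{i \in A} P_i$.

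Modeling the partition by hashing each coordinate independently and uniformly to $[r]$, one has $\Pr[T \subseteq J_A] = (k/r)^{|T|}$ for any non-empty $T \subseteq [n]$, so
\[
\E[X_A] = \sum_{T \neq \emptyset} (k/r)^{|T|}\hat{f}(T)^2 \leq (k/r)\,V \leq \frac{V}{20k},
\]
which is far below the threshold $V - \epsilon^2/4 \geq 3V/4$. Markov's inequality then yields $\Pr[X_A > V - \epsilon^2/4] \leq 1/(15k)$ for each fixed $A$.

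The main obstacle is converting these per-$A$ bounds into a simultaneous guarantee over all $\binom{r}{k}$ choices: a direct union bound over $\binom{20k^2}{k}$ sets is far too weak. I would sharpen the step in one of two ways. First, via a second-moment computation: a Fourier expansion gives $\E[X_A^2] = \sum_{T_1, T_2 \neq \emptyset}(k/r)^{|T_1 \cup T_2|}\hat{f}(T_1)^2\hat{f}(T_2)^2$, so $\Var_P(X_A)$ is controlled by the overlap between Fourier sets, which combined with the bound $\sum_T \hat{f}(T)^2 \leq 1$ should enable a Chebyshev-plus-union-bound argument that survives over $A$. Alternatively, via a contrapositive argument: if some $J_A$ is ``bad'' then $f_{J_A}$ is $\epsilon/2$-close to $f$ in $\ell_2$, and a birthday-style calculation using $r > 20k^2$ shows that a random partition separates any $k$-element coordinate set into $k$ distinct parts with probability at least $1 - k^2/(2r) \geq 1 - 1/40$; one could then select a single representative coordinate from each of the $k$ parts in $A$ to produce a genuine $k$-junta that is $\epsilon/2$-close to $f_{J_A}$ and hence $\epsilon$-close to $f$, contradicting the hypothesis that $f$ is $\epsilon$-far from $k$-juntas.
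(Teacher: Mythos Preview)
Your Fourier setup and the per-$A$ Markov bound are correct, and you correctly identify the union bound over $\binom{r}{k}$ choices of $A$ as the real difficulty. However, neither of your two proposed completions closes this gap.

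\textbf{Second-moment route.} The bound you would extract from Chebyshev is only polynomially small in $k$. Concretely, $\Var_P(X_A) = \sum_{T_1,T_2 \neq \emptyset}\hat f(T_1)^2\hat f(T_2)^2\big((k/r)^{|T_1\cup T_2|}-(k/r)^{|T_1|+|T_2|}\big)$, and for Fourier mass concentrated on degree-one sets this is of order $V^2 k/r$, yielding a tail bound of order $k/r = \Theta(1/k)$, no better than the Markov bound. You need a per-$A$ failure probability that is \emph{exponentially} small in $k$ to survive the union bound over $\binom{r}{k}\ge (20k)^k$ sets, and your second-moment calculation does not use the far-from-junta hypothesis in any way that would produce such decay.

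\textbf{Contrapositive route.} The step ``select a single representative coordinate from each of the $k$ parts in $A$ to produce a genuine $k$-junta that is $\epsilon/2$-close to $f_{J_A}$'' is false in general. The set $J_A$ can contain arbitrarily many coordinates, and $f_{J_A}$ may depend on all of them; projecting further to $k$ chosen representatives need not be close to $f_{J_A}$ at all. The birthday calculation you cite separates a \emph{fixed} $k$-element set, but you have not identified any such set---that is precisely what is missing.

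The paper's proof avoids the union-bound bottleneck via a structural argument you do not touch on. It observes that the family $\mathcal G_t=\{J:\Inf_f(\overline J)<t\varepsilon^2\}$ is $(k{+}1)$-intersecting (by subadditivity of influence together with the far-from-$k$-junta hypothesis), and then handles two cases. If some $J\in\mathcal G_{1/2}$ has $|J|<2k$, a birthday argument shows the partition separates $J$, and then no union of $k$ parts can contain the required $k{+}1$ elements of $J$. Otherwise $\mathcal G_{1/4}$ is $2k$-intersecting, and the Dinur--Safra/Friedgut bound on the $p$-biased measure of $t$-intersecting families gives $\Pr[J_A\in\mathcal G_{1/4}]\le (k/r)^{2k}$ for each $A$, which \emph{is} exponentially small and survives the union bound: $\binom{r}{k}(k/r)^{2k}\le (ek/r)^k<\tfrac16$. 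The intersecting-family structure, not a moment computation, is the missing idea.
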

For the reader's convenience, we include the proof of Lemma~\ref{lem:infparts} in Appendix~\ref{app:proofs}; though the original lemma in~\cite{BWY15} was only for Boolean-valued functions, the proof remains essentially unchanged.

\section{Testing by implicit learning}
	
The proof of Theorem~\ref{thm:general} is established by analyzing the \textsc{Implicit Learning Tester} algorithm.
		
\begin{algorithm}[t]
	\KwData{$q = \frac{2^{O(k)}}{\epsilon^5}$, $m = O(\frac{k^6}{\epsilon^5})$, $r = \log \frac{2^k}{100k^4}$}
	Draw $x^{(1)},\ldots, x^{(q)} \in \{0, 1\}^n$ independently and uniformly at random\;
	For each $c \in \{0,1\}^q$, define $S_c \gets \big\{i \in [n] : \big(x^{(1)}_i, \ldots, x^{(q)}_i\big) = c\big\}$\;
	\medskip
	
	Let $P_1,\ldots,P_{100k^4}$ be a random equi-partition of $\{0, 1\}^q$\;
	\For{each $J \subseteq [100k^4]$ of size $|J| = k$}{
		$S_J \gets \bigcup_{j \in J} \bigcup_{c \in P_j} S_c$\;
		$\eta_J \gets \EstInf(f, [n]\setminus S_J, m)$\;
	}
		
	$\{j^*_1,\ldots,j^*_k\} \gets \argmin_J \eta_J$\;
	$(P_{0,1},\ldots,P_{0,k}) \gets (P_{j^*_1}, \ldots, P_{j^*_k})$\;

	\medskip

	\For{$\ell = 1,\ldots,r$}{
		Let $P_{\ell,i,0}, P_{\ell,i,1}$ be a random equi-partition of $P_{\ell-1,i}$ for each $i \le k$\;
		
		\For {every $z \in \{0,1\}^k$}{
			$S_z \gets \bigcup_{i \le k} \bigcup_{c \in P_{\ell,i,z_i}} S_c$\;
			$\eta_z \gets \EstInf(f, [n]\setminus S_z, m)$\;
		}
	
		$z^*_\ell \gets \argmin_z \eta_z$\;
		For each $i \le k$, update $P_{\ell,i} \gets P_{\ell,i,z^*_\ell}$\;
	}
	\medskip

	Let $B = \{b_1,..., b_k\} \gets \bigcup_{i \le k} P_{r,i}$\; 
	
	If $\EstInf(f, [n]\setminus S_B, m) > \epsilon^2/1000$, reject\;
	
	Let $\phi: \{0, 1\}^n \rightarrow \{0, 1\}^k$ be any projection that satisfies $\phi(x)_i \in S_{b_i}$ for each $i \le k$\;
	
	\medskip
	\For{ $h \in \calF_{\core}^{(\frac{\epsilon}{1000})}$} {
		If $\frac1q \sum_{i=1}^{q} \big(f(x^{(i)}) - h(\phi(x^{(i)}))\big)^2 \leq 0.35 \epsilon$, accept and return $h$\; 
	}
	Reject\;
	
	\caption{\textsc{Implicit Learning Tester}$(\calF,k,\epsilon)$}
	\label{alg:implicit-learning}
\end{algorithm}

\subsection{Proof of Theorem~\ref{thm:general}}	

The analysis of the \textsc{Implicit Learning Tester} relies on two technical lemmas. The first shows that when the input function $f$ is 
close to a $k$-junta, then with reasonably large probability, the function $f$ is 
close to a junta on the set $B$ of $k$ parts that is identified by the algorithm. 

\begin{lemma}
\label{lem:Influence of variables outside the selected buckets is low}
	For any $\varepsilon > 0$, if the function $f : \{0,1\}^n \to [0,1]$ is $\varepsilon$-close
	to a $k$-junta and every call to \EstInf returns an influence estimate with additive error at most $\frac{\varepsilon^2}{100 k^2}$, then the set $B$ obtained by the \textsc{Junta-Property Tester} satisfies
	$
	\Pr\big[\Inf_f([n] \setminus S_B) > 100\varepsilon^2\big] \leq \tfrac{1}{20}.
	$
\end{lemma}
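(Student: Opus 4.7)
\medskip
\noindent\textbf{Proof plan.} Let $g$ be a $k$-junta with $\|f - g\|_2 \le \varepsilon$, and let $J^* \subseteq [n]$ denote its (at most $k$) relevant coordinates. The plan is to track the set $T_\ell := \bigcup_{i \le k}\bigcup_{c \in P_{\ell,i}} S_c$ of coordinates kept after $\ell$ refinement iterations, and argue that with high probability $J^* \subseteq T_\ell$ throughout the run; the conclusion then follows, since Fact~\ref{fact:inf-monotone} gives $\Inf_f([n] \setminus T_r) \le \Inf_f([n] \setminus J^*)$ and Proposition~\ref{prop:Relation_between_distance_and_influence} combined with $\Inf_g([n] \setminus J^*) = 0$ bounds the right-hand side by $\varepsilon^2$, well within the required $100\varepsilon^2$.

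For the initial selection step, a union bound over the $\binom{k}{2}$ pairs in $J^*$ yields that, with probability at least $1 - 1/(200k^2)$ over the random equi-partition of $\{0,1\}^q$ into $100k^4$ parts, the $k$ relevant coordinates fall into distinct parts $P_{j^*_1},\ldots,P_{j^*_k}$. Setting $J^{**}=\{j^*_1,\ldots,j^*_k\}$, we have $J^*\subseteq S_{J^{**}}$, so $\Inf_g([n]\setminus S_{J^{**}})=0$ and hence $\Inf_f([n]\setminus S_{J^{**}})\le\varepsilon^2$ by Proposition~\ref{prop:Relation_between_distance_and_influence}. Since $\hat{J}=\argmin_J\eta_J$ and each estimate is accurate to within $\varepsilon^2/(100k^2)$, any alternative $\hat J$ that omits a relevant coordinate $i^*$ has $\eta_J \ge \Inf_f(\{i^*\}) - \varepsilon^2/(100k^2)$ by Fact~\ref{fact:inf-monotone}, which rules out dropping $i^*$ whenever $\Inf_f(\{i^*\})$ is meaningfully above $\varepsilon^2$.

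For each refinement iteration $\ell$, let $z^{(\ell)} \in \{0,1\}^k$ be the ``ideal'' assignment that, in each of the $k$ current parts, selects the half containing the unique relevant coordinate in it (if one is present there). Under the inductive hypothesis $J^*\subseteq T_{\ell-1}$, we have $S_{z^{(\ell)}}\supseteq J^*$ and hence $\Inf_f([n]\setminus S_{z^{(\ell)}})\le\varepsilon^2$ by the same $\Inf_g = 0$ argument; any alternative $z$ that drops a relevant coordinate $i^*$ has $\Inf_f([n]\setminus S_z)\ge\Inf_f(\{i^*\})$, and so is preferred by the algorithm only when $\Inf_f(\{i^*\})\le\varepsilon^2 + 2\varepsilon^2/(100k^2)$. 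The main obstacle is handling such ``weak'' relevant coordinates, whose individual influence is below this threshold: Proposition~\ref{prop:Relation_between_distance_and_influence} forces each of them to have $\Inf_g(\{i^*\}) = O(\varepsilon^2)$ as well, and a subadditivity argument (Fact~\ref{fact:inf-monotone}) then bounds the aggregate contribution of all weak drops to $\Inf_f([n]\setminus T_r)$ within a constant multiple of $\varepsilon^2$. Combining this with the bound from the invariant event, a union bound over the partition-separation event, and a union bound over the $O(2^k r)$ estimation events (via Proposition~\ref{prop:infest} with $m$ chosen to meet the additive-error hypothesis) then delivers $\Pr[\Inf_f([n]\setminus S_B) > 100\varepsilon^2] \le 1/20$.
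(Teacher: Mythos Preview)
Your plan has a genuine gap in the treatment of ``weak'' relevant coordinates. Suppose the separation event holds and write $J_\ell = J^*\cap T_\ell$ for the surviving junta coordinates and $D_\ell = J^*\setminus T_\ell$ for the dropped ones. Your own comparison argument at round $\ell$ yields only
\[
\Inf_f(L_\ell)\ \le\ \Inf_f\big([n]\setminus S_{z^{(\ell)}}\big)+\tfrac{2\varepsilon^2}{100k^2}
\ \le\ \Inf_f\big([n]\setminus J_{\ell-1}\big)+\tfrac{2\varepsilon^2}{100k^2}
\ \le\ \varepsilon^2+\Inf_f(D_{\ell-1})+\tfrac{2\varepsilon^2}{100k^2},
\]
and combining with $\Inf_f(D_\ell)\le \Inf_f(L_\ell)$ gives the recursion $\Inf_f(D_\ell)\le \Inf_f(D_{\ell-1})+\varepsilon^2+O(\varepsilon^2/k^2)$. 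Thus each round in which a drop occurs can raise the baseline by roughly $\varepsilon^2$, and the next ``weak'' threshold rises accordingly. After up to $r\approx k$ such rounds you get only $\Inf_f(L_r)=O(k\varepsilon^2)$, not the claimed constant multiple of $\varepsilon^2$; the subadditivity bound $\Inf_f(D_r)\le\sum_{i\in D_r}\Inf_f(\{i\})$ fares no better, since the per-coordinate thresholds themselves cascade upward. Your sentence ``a subadditivity argument then bounds the aggregate contribution of all weak drops within a constant multiple of $\varepsilon^2$'' is exactly the step that does not go through.

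The paper sidesteps this cascade entirely. Rather than maintain an invariant on $J^*$, it telescopes $\Inf_f(L_r)=\Inf_f(L_0)+\sum_\ell(\Inf_f(L_\ell)-\Inf_f(L_{\ell-1}))$ and bounds each increment in the Fourier domain via Fact~\ref{fact:inf-fourier}: every $\hat f(T)^2$ is charged exactly once, namely at the first round where $T$ meets $L_\ell$. Rounds with no drop contribute only mass from $T\subseteq[n]\setminus[k]$, totalling at most $\varepsilon^2$. For rounds with a drop, the extra mass is controlled \emph{in expectation} over the randomness of the initial partition (the probability that a low-degree $T$ collides with the ``swapped'' parts, and that a high-degree $T$ is confined to $k$ parts, are both $O(1/k)$), giving $\E[\Inf_f(L_r)]\le 4\varepsilon^2$ and then Markov's inequality. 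This expectation-over-the-partition step is the key idea your outline is missing; a purely deterministic tracking of which junta coordinates survive cannot recover the constant-factor bound. (A minor side remark: the accuracy of the \EstInf calls is already a hypothesis of the lemma, so no union bound over estimation events belongs in this proof.)
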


The second lemma shows that the estimate in Step 20 provides a good estimate of the distance between $f$ and the functions in $\calP$.

\begin{lemma}
\label{lem:main-dist-estimate}
Fix $\varepsilon > 0$. Let $f : \{0,1\}^n \to [0,1]$ be a function that satisfies $\dist_2(f,g) \le \varepsilon$ for some function $g$ that is a junta on $J \subseteq [n]$, $|J| \le k$. Then for every $h_{\core} \in \calF_{\core}^{(\frac{\varepsilon}{1000})}$, the mapping $\psi : \{0,1\}^n \to \{0,1\}^k$ defined in the \textsc{Implicit Learning Tester} and the function $h = h_{\core} \circ \psi$ satisfy
\[
\Big| \Big( \tfrac 1q \sum_{i}^q \big(f(x^{(i)})) - h(x^{(i)})\big)^2 \Big)^{\frac{1}{2}} - \dist_2(g,h) \Big| \le  3\varepsilon
\]
except with probability at most $2e^{-16q\varepsilon^4} + \frac{5k^2}{2^q}$.
\end{lemma}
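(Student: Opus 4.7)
The plan is to prove the lemma by a triangle-inequality reduction followed by concentration arguments, with a conditioning step to handle the sample-dependence of $h$. Writing $\|u\|_{\rm emp} := (\frac{1}{q}\sum_{t=1}^q u(x^{(t)})^2)^{1/2}$ for the empirical $\ell_2$ norm on the samples, the triangle inequality yields
$$\Big|\|f - h\|_{\rm emp} - \|g - h\|_2\Big| \le \|f - g\|_{\rm emp} + \Big|\|g - h\|_{\rm emp} - \|g - h\|_2\Big|,$$
so it suffices to control the two right-hand terms. The first is easy: since $f, g$ are fixed with $\|f - g\|_2 \le \varepsilon$ and the samples are iid uniform, the summands $(f - g)(x^{(t)})^2 \in [0, 1]$ are iid with mean at most $\varepsilon^2$, and Hoeffding's inequality gives $\|f - g\|_{\rm emp} \le 2\varepsilon$ with failure probability exponentially small in $q\varepsilon^4$.

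The second term is the main obstacle because $h = h_{\core} \circ \psi$ depends on the samples through the algorithm's outputs $B = \{b_1, \ldots, b_k\}$ and the projection $\psi$. My plan is to condition on a good event $E$: letting $s_j := (x^{(1)}_j, \ldots, x^{(q)}_j) \in \{0,1\}^q$ be the signature of coord $j$, $E$ is the event that (i) the signatures $(s_j)_{j \in J}$ are pairwise distinct and (ii) no $s_j$ for $j \in J$ equals any chosen $b_i$. Since the $(s_j)_{j \in J}$ are iid uniform on $\{0,1\}^q$, a union bound over the $\binom{k}{2} + k^2$ relevant coincidences (each of probability at most $2^{-q}$) yields $\Pr[E^c] \le \frac{5 k^2}{2^q}$. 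Under $E$, every representative $r_i \in S_{b_i}$ satisfies $s_{r_i} = b_i \ne s_j$ for all $j \in J$, hence $r_i \notin J$, so $h$ is supported on a set $R \subseteq [n] \setminus J$ and $g - h$ is a $2k$-junta on the disjoint set $J \sqcup R$.

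Under $E$, expand $\|g - h\|_{\rm emp}^2 = \frac{1}{q}\sum_t (g_{\core}(x^{(t)}_J) - h_{\core}(x^{(t)}_R))^2$. Each summand is in $[0, 1]$ and depends only on $x^{(t)}_{J \cup R}$. The key claim is that, conditional on $E$ and the algorithm's choice of $B$, the bits $(x^{(t)}_{J \cup R})_{t=1}^q$ are iid uniform across $t$: the algorithm's decisions depend on the bits at $J$-coords only through their signatures, which $E$ already pins down (up to the forbidden patterns absorbed into the $\frac{5k^2}{2^q}$ factor). Consequently, Hoeffding's inequality applied to the iid summands bounds $|\|g - h\|_{\rm emp}^2 - \|g - h\|_2^2|$, and the elementary inequality $|a - b| \le \sqrt{|a^2 - b^2|}$ for nonnegative $a, b$ then converts this to a bound on $|\|g - h\|_{\rm emp} - \|g - h\|_2|$. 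A careful choice of constants in the two Hoeffding applications, combined via a union bound over the three failure events, yields the stated bound of $3\varepsilon$ with failure probability at most $2 e^{-16 q \varepsilon^4} + \frac{5 k^2}{2^q}$.

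The hardest part will be rigorously justifying the decoupling claim. The cleanest way is to re-parametrize the randomness as the iid uniform signatures $(s_i)_{i \in [n]}$, note that $B$ is a deterministic function of these signatures together with the independent \EstInf randomness, and observe that the $J$-signatures enter this computation only via their membership in the bucket structure, which is exactly the constraint captured by $E$; everything else about the $J$-signatures, including the bits relevant to evaluating $g - h$ on the samples, remains jointly uniform under the conditioning.
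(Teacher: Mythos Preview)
Your high-level strategy---triangle inequality to split off $\|f-g\|_{\rm emp}$ and then a decoupling argument for $\big|\|g-h\|_{\rm emp}-\|g-h\|_2\big|$---is exactly the paper's. The gap is in the decoupling step.

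Your event $E$ includes condition (ii): no signature $s_j$ with $j\in J$ equals any selected label $b_i$. But $s_j=b_i$ is precisely the event $j\in S_{b_i}$, i.e., the junta coordinate $j$ lies in one of the parts the algorithm picked. The labels $b_1,\ldots,b_k$ are not fixed in advance; they are adaptive functions of \emph{all} the signatures, chosen by a search procedure whose whole purpose (cf.\ Lemma~\ref{lem:Influence of variables outside the selected buckets is low}) is to place the coordinates of $J$ inside $S_B$. Thus the ``coincidence'' $\{s_j=b_i\}$ cannot be bounded by $2^{-q}$ via independence, and in fact condition (ii) can fail with probability close to~$1$, not $O(k^2/2^q)$. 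Your union bound on $\Pr[E^c]$ is therefore invalid, and with it the rest of the argument.

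Your last paragraph gestures at the right idea---that the algorithm's choices depend on the signatures only through the induced partition of $[n]$---but conditioning on $B$ and on your event $E$ does not implement it. The paper makes this precise with a different decoupling: reparametrize the sample randomness as $(\lambda,\pi)$, where $\lambda:[n]\to\{0,1\}^q$ is a uniform labeling (fixing the partition $\{S_c\}$) and $\pi$ is an independent uniform permutation of $\{0,1\}^q$, via $x^{(i)}_j=\pi(\lambda(j))_i$. The key point is that the algorithm's selected parts, hence $\psi$ and $h$, depend only on $\lambda$ and the algorithm's internal coins---not on $\pi$. After conditioning on $\lambda$ (and on $\lambda$ separating $J$, which costs $\le k^2/2^q$), the signatures of the coordinates in $J^*=J\cup\supp\psi$ are, over the randomness of $\pi$, a uniform draw of $|J^*|\le 2k$ elements of $\{0,1\}^q$ \emph{without replacement}. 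This is not iid, but its total variation distance from the iid law is at most $\binom{2k}{2}/2^q$, after which Hoeffding applies; that TV correction is where the $O(k^2/2^q)$ term in the lemma actually comes from.
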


The proofs of these lemmas are presented in Sections~\ref{sec:first-lemma} and~\ref{sec:second-lemma}. We now show how they are used to complete the proof of Theorem~\ref{thm:general}.

As a first observation, we note that by Hoeffding's inequality and the union bound, all of the calls to \textsc{EstimateInf} have additive error at most $\frac{\epsilon^2}{10^6k^2}$ except with probability at most $\frac16$. In the following, we assume that this condition holds and show how, when it does, the algorithm correctly accepts or rejects with probability with probability at least $\frac56$.

\begin{claim}[Completeness]
\label{claim:completeness}
When $f$ is $\frac{\epsilon}{10^6}$-close to the property $\calF$ of $k$-juntas, the \textsc{Implicit Learning Tester} accepts with probability at least $\frac56$.
\end{claim}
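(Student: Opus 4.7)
Let $g \in \calF$ satisfy $\|f - g\|_2 \le \epsilon/10^6$, and let $J$ be the (at most $k$) coordinates on which $g$ depends. My plan is to combine Lemmas~\ref{lem:Influence of variables outside the selected buckets is low} and~\ref{lem:main-dist-estimate} with the assumed accuracy of the \EstInf calls to show acceptance with probability at least $\tfrac56$. First, Lemma~\ref{lem:Influence of variables outside the selected buckets is low} applied with $\varepsilon = \epsilon/10^6$ gives $\Inf_f([n] \setminus S_B) \le 100(\epsilon/10^6)^2$ on an event of probability at least $19/20$, and together with the additive accuracy $\epsilon^2/(10^6 k^2)$ of \EstInf this makes the value computed at the $S_B$-influence test lie well below the threshold $\epsilon^2/1000$, so the algorithm does not reject there.

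Next, I will exhibit an explicit witness $h_{\core} \in \calF_{\core}^{(\epsilon/1000)}$ that passes the final acceptance check. The natural choice is the pointwise rounding $h_{\core} := g_{\core}^{(\epsilon/1000)}$ of $g$'s core, relabeled so that its $i$-th argument corresponds to the selected bucket $b_i$ containing the relevant variable $j_i \in J$. Since $\calF$ (and hence $\calF_{\core}$) is closed under relabeling of coordinates, this relabeled rounding still lies in $\calF_{\core}^{(\epsilon/1000)}$, with $\|g_{\core} - h_{\core}\|_\infty \le \epsilon/1000$. Applying Lemma~\ref{lem:main-dist-estimate} with $\varepsilon = \epsilon/10^6$ to this witness, except on a failure event of probability $2 e^{-16 q (\epsilon/10^6)^4} + 5 k^2/2^q = o(1)$, the empirical square root is within $3\epsilon/10^6$ of $\dist_2(g, h)$. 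Because $h = h_{\core} \circ \phi$ is obtained from $g$ only through the bucket-relabeling and the $\epsilon/1000$-discretization, $\dist_2(g, h) \le \epsilon/1000$, so squaring shows the empirical sum is at most $(\epsilon/1000 + 3\epsilon/10^6)^2 \ll 0.35\epsilon$ and the algorithm accepts. A union bound over the three bad events gives an acceptance probability of at least $1 - \tfrac16 - \tfrac{1}{20} - o(1) \ge \tfrac56$.

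The main obstacle will be the bucket-relabeling step that underlies the estimate $\dist_2(g,h) \le \epsilon/1000$: I need each $j \in J$ to end up in a distinct selected bucket $b_i$, so that $\phi$ captures the relevant coordinates of $g$ up to a permutation. The initial random equi-partition of $\{0,1\}^q$ into $100 k^4$ parts separates $J$ except with probability at most $1/(100 k^2)$ by a birthday calculation, and the bucket-refinement loop cannot discard any relevant variable without inflating $\Inf_f$ outside the selected buckets above the threshold that Lemma~\ref{lem:Influence of variables outside the selected buckets is low} already controls. These are essentially the same combinatorial ingredients that feed into the proof of that lemma, so the bookkeeping for completeness reduces to a careful union bound over these structural events.
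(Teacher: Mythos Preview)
Your outline has the right overall shape, but the point you flag as the ``main obstacle'' is a genuine gap, and your proposed resolution of it is incorrect. You want each relevant coordinate $j\in J$ of $g$ to land in a distinct selected bucket $b_i$, and you justify this by asserting that the refinement loop ``cannot discard any relevant variable without inflating $\Inf_f$ outside the selected buckets above the threshold that Lemma~\ref{lem:Influence of variables outside the selected buckets is low} already controls.'' That assertion is false: nothing in the hypotheses lower-bounds the individual influence $\Inf_g(\{j\})$ of a relevant coordinate, so the loop can discard a low-influence $j\in J$ while still keeping $\Inf_f([n]\setminus S_B)\le 100(\epsilon/10^6)^2$. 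In fact the proof of Lemma~\ref{lem:Influence of variables outside the selected buckets is low} explicitly allows rounds $\ell\in\mathcal{E}$ in which junta coordinates are eliminated and only bounds their aggregate Fourier contribution; it does not prevent such eliminations. Hence your claimed bound $\dist_2(g,h)\le \epsilon/1000$, which rests entirely on $J\subseteq S_B$, is not established.

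The paper handles this differently: it does \emph{not} assume $J\subseteq S_B$. It lets $[j]=J\cap S_B$ be whatever subset of junta coordinates actually survives, builds $\psi$ from those together with arbitrary representatives for the remaining buckets, and then bounds the resulting discrepancy by
\[
\dist_2(h^*,h)\;=\;2\,\Inf_{h^*}([k]\setminus[j])^{1/2}\;\le\;2\bigl(\Inf_f([n]\setminus S_B)^{1/2}+\dist_2(f,h^*)\bigr),
\]
using Proposition~\ref{prop:Relation_between_distance_and_influence} and the conclusion of Lemma~\ref{lem:Influence of variables outside the selected buckets is low}. This is what turns the low-influence guarantee into $\dist_2(g,h)\le\tfrac{13}{10^6}\epsilon$ without ever asserting that all relevant variables were retained. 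Your argument needs this step (or an equivalent replacement) to go through.
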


\begin{proof}
First, by Lemma~\ref{lem:Influence of variables outside the selected buckets is low}, the probability that $f$ is rejected on step 17 is at most $\frac1{18}$. In the rest of the proof, we will show that except with probability at most $\frac19$, there is a function $h_{\core} \in \calF_{\core}^{(\frac{\epsilon}{1000})}$ for which the algorithm accepts on line 20.

Let $g \in \calF$ be a function that satisfies $\dist_2(f,g) \le \frac{\epsilon}{10^6}$. Without loss of generality, we can assume that $g$ is a junta on $[k]$.
Let $J = [k] \cap S_B$ be the set of the junta variables of $g$ that are contained in the final parts selected by the algorithm. Again without loss of generality (by relabeling the input variables once again if necessary), we can assume that $J = [j]$ for some $j \le k$, and $i \in S_{b_i}$, for $i \le j$.

Define $\psi : \{0,1\}^n \to \{0,1\}^k$ to be the mapping defined by $\psi(x) = (x_1,\ldots,x_j,x_{i_1},\ldots,x_{i_{k-j}})$ where $i_1,\ldots,i_{k-j} \in [n] \setminus [k]$ are representative coordinates from the remaining parts $b \in B$ for which $P_b \cap [k] = \emptyset$.

Let $g_{\core} \in \calF_{\core}$ be the core of $g$ corresponding to the projection $\psi(x) = (x_1,\ldots, x_k )$, and let $h_{\core} \in \calF_{\core}^{(\frac\epsilon{10^6})}$ be the discretized approximation to $g_{\core}$. Define $h = h_{\core} \circ \psi$. 
By our choice of $g$, we have $\dist_2(f,g) \le \frac{\epsilon}{10^6}$.
In order to invoke Lemma~\ref{lem:main-dist-estimate}, we now want to bound $\dist_2(g,h)$.

Let $h^* \in \calF^{(\frac{\epsilon}{10^6})}$, be the discretized approximation of $g$. Then $\dist_2(g,h^*) \le \frac{\epsilon}{10^6}$ and the triangle inequality implies that
\[
\dist_2(f,h^*) \le \dist_2(f,g) + \dist_2(g,h^*) \le \tfrac{2\epsilon}{10^6}
\]
and that
\[
\dist_2(g,h) \le \dist_2(g,h^*) + \dist_2(h^*,h) \le \dist_2(h^*,h) + \tfrac{\epsilon}{10^6}.
\]
Furthermore, since $h_{\core} = h^*_{\core}$,
\begin{align*}
\dist_2(h^*,h) &= \E_x\Big[ \big( h^*_{\core}(x_1,\ldots,x_k) - h^*_{\core}(x_1,\ldots,x_j,x_{i_1},\ldots,x_{i_{k-j}})\big)^2]^{\frac12} \\
&= 2\,\Inf_{h^*_{\core}}( [k] \setminus [j] )^{\frac12} 
= 2\,\Inf_{h^*}( [n] \setminus [j])^{\frac12}.
\end{align*}
By Proposition~\ref{prop:Relation_between_distance_and_influence} and Lemma~\ref{lem:Influence of variables outside the selected buckets is low}, except with probability at most $\frac1{18}$, 
\[
\Inf_{h^*}( [n] \setminus [j])^{\frac12} \le
\Inf_f([n] \setminus [j])^{\frac12} + \dist_2(f,h^*)
\le \Inf_f([n] \setminus S_B)^{\frac12} + \tfrac{2\epsilon}{10^6} \le
\tfrac{12\epsilon}{10^6}
\]
and the distance between $g$ and $h$ is bounded by $\dist_2(g,h) \le \frac{13}{10^6}\epsilon$. When this bound holds, by Lemma~\ref{lem:main-dist-estimate} with $\varepsilon = \frac{\epsilon}{100}$, the algorithm accepts $f$ for this $h$ except with probability at most $\frac{1}{18}$.
\end{proof}

\begin{claim}[Soundness I]
\label{claim:soundnessI}
If $f$ is $\frac{\epsilon}{100}$-far from being a $k$-junta, then the \textsc{Implicit Learning Tester} rejects with probability at least $\frac56$.
\end{claim}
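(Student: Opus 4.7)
The plan is to show that the hypothesis forces the estimate produced on line~17 to exceed the threshold $\epsilon^2/1000$ with high probability, so the algorithm rejects at that line. The central tool is Lemma~\ref{lem:infparts}.

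First I would observe that the partition of $[n]$ implicit in lines~1--3 is a uniformly random partition of $[n]$ into $100k^4$ equal-probability blocks: coordinate $i$ is assigned to part $j$ iff its signature $(x^{(1)}_i,\ldots,x^{(q)}_i)$ lies in $P_j$. Since the samples $x^{(1)},\ldots,x^{(q)}$ are independent and uniform over $\{0,1\}^n$, the signatures are i.i.d.\ uniform over $\{0,1\}^q$, and since $P_1,\ldots,P_{100k^4}$ equi-partition $\{0,1\}^q$, each coordinate lands independently in each of the $100k^4$ initial blocks with probability $1/(100k^4)$. Because $100k^4 > 20k^2$, I can invoke Lemma~\ref{lem:infparts} with distance parameter $\epsilon/100$ to conclude that, with probability at least $\tfrac{5}{6}$ over this partition, $\Inf_f(\overline{J}) \ge (\epsilon/100)^2/4$ holds for \emph{every} union $J$ of $k$ of the $100k^4$ initial parts.

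Next I would transfer this uniform lower bound to the final selected set $S_B$. The set $S_{J^*}$ chosen on line~8 is itself a union of $k$ initial parts, so the lemma immediately yields $\Inf_f([n]\setminus S_{J^*}) \ge (\epsilon/100)^2/4$. The refinement in lines~9--15 only replaces each block $P_{\ell-1,i}$ by a subset $P_{\ell,i,z^*_\ell} \subseteq P_{\ell-1,i}$, so $S_B \subseteq S_{J^*}$ and therefore $[n]\setminus S_B \supseteq [n]\setminus S_{J^*}$. Monotonicity of influence in its argument set (Fact~\ref{fact:inf-monotone}) then gives $\Inf_f([n]\setminus S_B) \ge \Inf_f([n]\setminus S_{J^*})$. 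Combined with the standing assumption from the start of this section that every call to \EstInf has additive error at most $\epsilon^2/(10^6 k^2)$, this places the estimate on line~17 comfortably above $\epsilon^2/1000$ for the constants as chosen, so the algorithm rejects on that line. Since the only ``bad'' event we had to control is the failure of Lemma~\ref{lem:infparts}, the rejection probability is at least $\tfrac{5}{6}$.

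The main obstacle I anticipate is conceptual rather than computational: the refinement loop in lines~9--15 is explicitly designed to \emph{minimize} the estimated influence of the complement, so one might reasonably worry that the greedy decisions could push $\Inf_f([n]\setminus S_B)$ below the line-17 threshold even when the initial selection $S_{J^*}$ has a large-influence complement. Monotonicity of influence is exactly what rescues the argument: shrinking $S_B$ can only enlarge the complement, and the influence of the larger set is at least that of the smaller one, so the floor established for $[n]\setminus S_{J^*}$ by Lemma~\ref{lem:infparts} propagates unchanged through the refinement regardless of which halves $z^*_\ell$ the algorithm selects.
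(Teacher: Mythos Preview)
Your approach is essentially identical to the paper's: both recognize that $S_{P_1},\ldots,S_{P_{100k^4}}$ is a uniformly random partition of $[n]$ into more than $20k^2$ parts, invoke Lemma~\ref{lem:infparts}, and then use the containment $S_B \subseteq S_{J^*}$ (the paper writes this as $S_B \subseteq \overline{L_0}$) together with monotonicity of influence to transfer the lower bound on $\Inf_f([n]\setminus S_{J^*})$ to $\Inf_f([n]\setminus S_B)$ and conclude rejection on line~17.

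One numerical point deserves a flag. Applying Lemma~\ref{lem:infparts} with distance parameter $\epsilon/100$ yields only $\Inf_f(\overline{J}) \ge (\epsilon/100)^2/4 = \epsilon^2/40000$, which is \emph{below} the rejection threshold $\epsilon^2/1000$ on line~17, not ``comfortably above'' it as you write. The paper's own proof records the bound as $\epsilon^2/400$, which appears to be a slip of the same kind. This is purely a constants-tuning issue---retuning either the threshold in line~17 or the ``$\epsilon/100$'' in the claim's hypothesis fixes it---and it does not affect the structure of the argument, but the specific inequality you assert in your last paragraph is false as written.
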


\begin{proof}
The initial partition $S_{P_1},\ldots,S_{P_{100k^4}}$ is a random partition of $[n]$ with more than $20k^2$ parts so, by Lemma~\ref{lem:infparts}, with probability at least $\frac{5}{6}$, for any union $J \subseteq [n]$ of at most $k$ of these parts we have $\Inf_f([n] \setminus J) \ge \frac{\epsilon^2}{400}$. When this is the case, the inclusion $S_B \subseteq \overline{L_0}$ and the fact that $L_0$ is the complement of the union of some set of $k$ parts in the random partition imply that
\[
\Inf_f([n] \setminus S_B) \ge \Inf_f(L_0) \ge \frac{\epsilon^2}{400}
\]
and, under the assumed accuracy of \textsc{EstimateInf} calls, the algorithm rejects $f$ in Step 17.
\end{proof}

\begin{claim}[Soundness II]
\label{claim:soundnessII}
If $f$ is $\frac{\epsilon}{100}$-close to a $k$-junta, but is $\frac{99\epsilon}{100}$-far from $\calF$, then the \textsc{Implicit Learning Tester} rejects with probability at least $\frac56$.
\end{claim}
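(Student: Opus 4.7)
The plan is to show that with high probability, the inner test inside the final \textbf{for} loop fails for \emph{every} candidate core $h_{\core} \in \calF_{\core}^{(\epsilon/1000)}$, so the algorithm falls through to the concluding \texttt{Reject}.

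Let $g$ be a $k$-junta realizing $\dist_2(f,g) \le \epsilon/100$ (guaranteed by hypothesis), and invoke Lemma~\ref{lem:main-dist-estimate} with this $g$ and $\varepsilon = \epsilon/100$. For each candidate and its associated $h = h_{\core} \circ \phi$, the lemma guarantees that the empirical root-mean-squared deviation lies within $3\epsilon/100$ of $\dist_2(g,h)$, with per-candidate failure probability $2 e^{-16 q (\epsilon/100)^4} + 5k^2/2^q$. A union bound over the at most $(1000/\epsilon)^{2^k}$ candidate cores, combined with the algorithm's choice $q = 2^{O(k)}/\epsilon^5$, keeps the total failure probability at most $\tfrac{1}{6}$.

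The crux is a uniform lower bound $\dist_2(g,h) > 97\epsilon/100$ for every candidate. By construction, each $h_{\core} \in \calF_{\core}^{(\epsilon/1000)}$ is the $\epsilon/1000$-discretization of the core of some $k$-junta $\epsilon/10^6$-close to $\calF$. Because $\calF$ is invariant under relabeling of coordinates, composing this near-approximant with $\phi$ produces a $k$-junta still $\epsilon/10^6$-close to $\calF$, and the discretization costs at most a further $\epsilon/2000$ in $\ell_2$. Hence $\dist_2(h,\calF) \le \epsilon/1000$ for every candidate. Combined with the Soundness~II hypothesis $\dist_2(f,\calF) \ge 99\epsilon/100$ and the triangle inequality,
\[
\dist_2(g,h) \;\ge\; \dist_2(f,\calF) - \dist_2(h,\calF) - \dist_2(f,g) \;\ge\; \tfrac{99\epsilon}{100} - \tfrac{\epsilon}{1000} - \tfrac{\epsilon}{100} \;>\; \tfrac{97\epsilon}{100}.
\]
Feeding this back into Lemma~\ref{lem:main-dist-estimate}, the empirical root-mean-squared for every candidate exceeds $94\epsilon/100$, so the empirical squared sum sits well above the acceptance threshold and the algorithm rejects.

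The main obstacle is the uniform bound $\dist_2(h,\calF) \le \epsilon/1000$: the projection $\phi$ is produced by the random partitioning stages of the algorithm and has no a priori alignment with the relevant variables of the junta underlying $h_{\core}$, so some structural property of $\calF$ is needed to transport the approximation guarantee across a change of coordinates. Relabeling-invariance of $\calF$ is the ingredient that does this, and the rest is triangle-inequality bookkeeping. A secondary concern is the size of $\calF_{\core}^{(\epsilon/1000)}$, which is doubly exponential in $k$; the choice $q = 2^{O(k)}/\epsilon^5$ drives the exponent $16 q (\epsilon/100)^4$ well above $2^k \log(1/\epsilon)$, so the union bound is comfortably absorbed.
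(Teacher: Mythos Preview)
Your proposal is correct and follows essentially the same route as the paper: both fix a $k$-junta $g$ with $\dist_2(f,g)\le\epsilon/100$, use relabeling invariance to conclude that every $h=h_{\core}\circ\phi$ lies within $\epsilon/1000$ of $\calF$, obtain $\dist_2(g,h)\ge 97\epsilon/100$ by the triangle inequality, and then apply Lemma~\ref{lem:main-dist-estimate} with a union bound over the $(1000/\epsilon)^{2^k}$ candidate cores. One small slip: each $h_{\core}\in\calF_{\core}^{(\epsilon/1000)}$ is the discretization of the core of a $k$-junta that is \emph{in} $\calF$ (distance $0$, not $\epsilon/10^6$), which only makes your bound $\dist_2(h,\calF)\le\epsilon/1000$ easier.
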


\begin{proof}	
Let $g$ be any $k$-junta that satisfies $\dist_2(f,g) \le \frac{\epsilon}{100}$. For any $h_{\core} \in \calF_{\core}^{(\frac{\epsilon}{1000})}$ and any injective mapping $\psi : \{0,1\}^n \to \{0,1\}^k$, the function $h = h_{\core} \circ \psi$ is in $\calF^{(\frac{\epsilon}{1000})}$ and so by the triangle inequality,
\[
dist_2(f, \calF^{(\frac{\epsilon}{1000})}) \geq dist_2(f, \calF) - \tfrac{\epsilon}{1000}
\]
and
\[
\dist_2(g,h) \ge \dist_2(f,h) - \dist_2(f,g) \ge \tfrac{99}{100}\epsilon -  \tfrac{\epsilon}{1000} -  \tfrac{\epsilon}{100} \ge \tfrac{97}{100}\epsilon.
\]
Then, by Proposition~\ref{prop:Relation_between_distance_and_influence} and the union bound over all $|\calF_{\core}^{(\frac{\epsilon}{1000})}| \le (1000/\epsilon)^{2^k}$ functions in $\calF_{\core}^{(\frac{\epsilon}{1000})}$, with probability at least $\frac56$, the condition in Step 20 is never satisfied and the algorithm rejects. 
\end{proof}

To complete the proof of Theorem~\ref{thm:general} in the case where $p=2$, consider now any property 
$\calP$ that contains only functions which are $\frac{\epsilon}{10^6}$-close to some $k$-junta. Let $\calF$ be the property that includes all $k$-juntas that are $\frac{\epsilon}{10^6}$-close to $\calP$.
Claim~\ref{claim:completeness} shows that \textsc{Implicit Learning Tester} accepts every function in $\calP$ with the desired probability, and Claims~\ref{claim:soundnessI} and~\ref{claim:soundnessII} shows that it rejects all functions that are $\epsilon$-far from $\calP$.
Finally, we note that the query complexity of the algorithm is at most $q + 2m( 2^{O(k\log(k))} + 2^kq) = \frac{2^{O(k \log k)}}{\epsilon^{10}}$, as claimed. Finally, the general result for $\ell_p$ testing when $p\neq 2$ follows from Fact~\ref{fact:lp-testing}.

\subsection{Proof of Lemma~\ref{lem:Influence of variables outside the selected buckets is low}}
\label{sec:first-lemma}

	Let $f$ be any function $\varepsilon$-close to a $k$-junta and assume without loss of generality (by relabeling the input variables if necessary) that $f$ is close to a junta on $[k]$. The definition of $P_1,\ldots,P_{100k^4}$ in step 3, means that $S_{P_1},\ldots,S_{P_{100k^4}}$ is a random partition of $[n]$. So by the union bound, the probability that any two of the coordinates in $[k]$ land in the same part is at most $\frac1{100k^2}$. 
	
	For each $\ell = 0,1,2,\ldots,r$, let $L_\ell = [n] \setminus \bigcup_{i=1}^k S_{P_{\ell,i}}$ denote the set of variables that have been ``eliminated'' after $\ell$ iterations of the loop. Then $[n] \setminus S_B = L_r$ and
	\begin{align}
	\label{eqn:telescope}
	\Inf_f([n] \setminus S_B) = \Inf_f(L_0) + \sum_{\ell=1}^r \Big( \Inf_f(L_\ell) - \Inf_f(L_{\ell-1}) \Big).
	\end{align}
	We bound both terms on the right-hand side of the expression separately.
	
	By Proposition~\ref{prop:Relation_between_distance_and_influence}, we have $\Inf_f([n] \setminus [k]) \le \varepsilon^2$ and so by the monotonicity of influence there is a choice of $J \subseteq [k^2]$ of size $|J| \le k$ for which 
	$\Inf_f([n] \setminus S_J) \le \varepsilon^2$. The guaranteed accuracy on \EstInf then implies that
	\begin{equation}
	\Inf_f(L_0) \le (1 + \tfrac2{100k^2}) \varepsilon^2.
	\end{equation}
	
	Define $\mathcal{E} = \{\ell \le r : (L_\ell \setminus L_{\ell-1}) \cap [k] \neq \emptyset\}$ to be the set of rounds for which the algorithm eliminated at least one of the coordinates in $[k]$. 
	By this definition, each $\ell \in [r] \setminus \mathcal{E}$ satisfies $(L_\ell \setminus L_{\ell-1}) \cap [k] = \emptyset$ and
	\begin{align}
	\sum_{\ell \in [r] \setminus \mathcal{E}} \Inf_f(L_\ell) - \Inf_f(L_{\ell-1})
	&= \sum_{\ell \in [r] \setminus \mathcal{E}} ~\sum_{T:\,T \cap L_\ell \neq \emptyset \wedge T \cap L_{\ell-1} = \emptyset} \hat{f}(T)^2 \nonumber \\
	&\le \sum_{T \subseteq [n] \setminus [k]} \hat{f}(T)^2 \le \Inf_f([n] \setminus [k]) \le \varepsilon^2.
	\end{align} 
	For each $\ell \in \mathcal{E}$, define $X_\ell = \{\cup_{i=1}^{k} S_{P_{\ell,i, 1 - (z^*_\ell)_i}}: S_{P_{\ell,i, 1 - (z^*_\ell)_i}} \cap [k] \neq \emptyset \}$ to be the set of coordinates in the parts that contain a coordinate in $[k]$ that was eliminated in the $\ell$th iteration of the loop. Let also $Y_\ell = \{\cup_{i=1}^{k} S_{P_{\ell,i, (z^*_\ell)_i}}: S_{P_{\ell,i, 1 - (z^*_\ell)_i}} \cap [k] \neq \emptyset \}$ be the coordinates in the parts that were kept instead. Then the guaranteed accuracy of \EstInf and the choice of $z^*_\ell$ implies that
	\[
	\Inf_f( L_\ell ) \le \Inf_f\big( (L_\ell \setminus X_\ell) \cup Y_\ell \big) + 2\,\tfrac{\varepsilon^2}{100k^2}
	\]
	and, therefore,
	\begin{align}
	\sum_{\ell \in \mathcal{E}} \Inf_f(L_\ell)& - \Inf_f(L_{\ell-1})
	\le \frac{2\varepsilon^2}{1000k} + \sum_{\ell \in \mathcal{E}}
	\Inf_f\big( (L_\ell \setminus X_\ell) \cup Y_\ell \big) - \Inf_f( L_{\ell-1} ) \nonumber \\
	&\le \frac{2\varepsilon^2}{1000k} + 
	\sum_{\ell \in \mathcal{E}} \, \sum_{T:\,T \cap (L_\ell \setminus X_\ell) \neq \emptyset \wedge T \cap L_{\ell-1} = \emptyset}
	\hat{f}(T)^2
	+ 
	\sum_{\ell \in \mathcal{E}} \, \sum_{T:\,T \cap Y_\ell \neq \emptyset \wedge T \cap L_{\ell-1} = \emptyset}
	\hat{f}(T)^2.
	\label{eqn:splitXY}
	\end{align}
	As above, since $(L_\ell \setminus X_\ell) \cap [k] = \emptyset$, 
	\begin{equation}
	\sum_{\ell \in \mathcal{E}} \, \sum_{T:\,T \cap (L_\ell \setminus X_\ell) \neq \emptyset \wedge T \cap L_{\ell-1} = \emptyset}
	\hat{f}(T)^2 \le \sum_{T \subseteq [n] \setminus [k]} \hat{f}(T)^2 \le \varepsilon^2.
	\end{equation}
	It remains to bound the last sum on the right-hand side of~\eqref{eqn:splitXY}. By splitting up the terms in this sum according to whether $|T| \le k$ or not, we obtain
	\[
	\sum_{T: T \cap Y_\ell \neq \emptyset \wedge T \cap L_{\ell-1} = \emptyset} \hat{f}(T)^2 
	\le 
	\sum_{|T| \le k} \hat{f}(T)^2 \cdot \mathbf{1}[T \cap Y_\ell \neq \emptyset] 
	+ \sum_{|T| > k} \hat{f}(T)^2 \cdot \mathbf{1}[T \cap L_{\ell-1} = \emptyset].
	\]
	Let $Z \subseteq [n] \setminus [k]$ denote the set of coordinates that occur in one of the the original parts $S_{P_1},\ldots,S_{P_{100k^4}}$ that also contains one of the elements in $[k]$. Then $Y_\ell \subseteq Z$ and
	\[
	\sum_{\ell \in \mathcal{E}} 
	\sum_{|T| \le k} \hat{f}(T)^2 \cdot \mathbf{1}[T \cap Y_\ell \neq \emptyset]
	\le
	\sum_{\ell \in \mathcal{E}} 
	\sum_{|T| \le k} \hat{f}(T)^2 \cdot \mathbf{1}[T \cap Z \neq \emptyset]
	\le k \cdot
	\sum_{|T| \le k} \hat{f}(T)^2 \cdot \mathbf{1}[T \cap Z \neq \emptyset].
	\]
	The probability, over the choice of $P_1,\ldots,P_{100k^4}$, that $T \cap Z \neq \emptyset$ is at most $|T|/100k^3$, so the expected value of the last expression (again over the choice of the initial partition) is bounded above by
	\begin{equation}
	\E \Big[ \sum_{\ell \in \mathcal{E}} 
	\sum_{|T| \le k} \hat{f}(T)^2 \cdot \mathbf{1}[T \cap Y_\ell \neq \emptyset] \Big]
	\le k \cdot \sum_{|T| \le k, T \setminus [k] \neq \emptyset} \hat{f}(T)^2 \cdot \Big(\frac{k}{100k^3}\Big)
	\le \frac{1}{100k} \cdot \Inf_f([n] \setminus [k]) \le \frac{\varepsilon^2}{100k}.
	\end{equation}
	Lastly, since $L_0 \subseteq L_{\ell-1}$ for each $\ell \ge 1$, 
	\[
	\sum_{|T| > k} \hat{f}(T)^2 \cdot \mathbf{1}[T \cap L_{\ell-1} = \emptyset]
	\le 
	\sum_{|T| > k} \hat{f}(T)^2 \cdot \mathbf{1}[T \cap L_0 = \emptyset].
	\]
	A set $T$ can be disjoint from $L_0$ only when its elements are contained in at most $k$ of the parts of the initial random partition, which happens with probability at most $\frac1{100k^2}$ when $|T| > k$, so 
	\begin{equation}
	\label{eqn:mainlem-last}
	\E \Big[ 
	\sum_{\ell \in \mathcal{E}} \sum_{|T| > k} \hat{f}(T)^2 \cdot \mathbf{1}[T \cap L_{\ell-1} = \emptyset] \Big] 
	\le \E \Big[ k \sum_{|T| > k} \hat{f}(T)^2 \cdot \mathbf{1}[T \cap L_0 = \emptyset] \Big] \le 
	\frac{1}{100k} \sum_{|T| > k} \hat{f}(T)^2 
	\le \frac{\varepsilon^2}{100k},
	\end{equation}
	where the last inequality uses the fact that $\sum_{|T| > k} \hat{f}(T)^2 \le \Inf_f([n] \setminus [k])$.
	
	Combining the inequalities~\eqref{eqn:telescope}--\eqref{eqn:mainlem-last}, we obtain that the expected value of $\Inf_f([n] \setminus S_B)$ is bounded above by
	\[
	\E\big[ \Inf_f([n] \setminus S_B) \big] 
	\le (1 + \tfrac2{100k^2}) \varepsilon^2 + \tfrac{2\varepsilon^2}{100k} +(1 +  \tfrac2{100k}) 2\varepsilon^2  \le 4\varepsilon^2.
	\]
	Applying Markov's inequality and adding the probability that the junta variables are completely separated in the partition $P_1,\ldots,P_{100k^4}$ completes the proof of the lemma.

\subsection{Proof of Lemma~\ref{lem:main-dist-estimate}}
\label{sec:second-lemma}

	For any $\bfx = (x^{(1)},\ldots,x^{(q)})$, let $\dist_{\bfx}(f_1,f_2) = \Big( \frac1q \sum_{i=1}^q \big( f_1(x^{(i)}) - f_2(x^{(i)}) \big)^2 \Big)^{1/2}$ denote the empirical distance between $f_1$ and $f_2$ according to $\bfx$. 
	To prove the lemma, we want to show that $\dist_{\bfx}(f,h)$ is within the specified bounds.
	
	The function $\dist_{\bfx}$ is a metric, so we can apply the triangle inequality to obtain
	\[
	\dist_{\bfx}(f,h) \le \dist_{\bfx}(f,g) + \dist_{\bfx}(g,h).
	\] 
	By Hoeffding's inequality, when $x^{(1)},\ldots,x^{(q)}$ are drawn independently and uniformly at random, the upper bound
	\[
	\dist_{\bfx}(f,g) \le \dist_2(f,g) + \varepsilon \le 2 \varepsilon
	\]
	holds except with probability at most $e^{-16q\varepsilon^4}$.
	
	We now want to show that $\dist_{\bfx}(g,h)$ is also close to $\dist_2(g,h)$. This analysis is a bit more subtle, however, because the choice of samples $x^{(1)},\ldots,x^{(q)}$ is \emph{not} independent of $h$ (as it affects what mapping $\psi$ will be chosen by the algorithm). So before we can apply concentration inequalities, we must ``decouple'' ${\bfx}$ and $h$. To do so, we introduce a new random process for generating ${\bfx}$. Let $\lambda : [n] \to \{0,1\}^q$ be chosen uniformly at random. This function corresponds to a random partition of the set $[n]$ of coordinates into $2^q$ parts. Let $\pi : \{0,1\}^q \to \{0,1\}^q$ be a random permutation. Then the random variable $\bfx$ obtained by setting $x^{(i)}_j = \pi( \lambda(j) )_i$ has the desired uniform distribution over sequences of $q$ vectors in $\{0,1\}^n$.
	
	This random process is designed so that the choice of $\psi$ in the algorithm (and therefore also $h$) is \emph{independent} of $\pi$; the only information about $\bfx$ used in determining it is the identity of the parts defined by $\lambda$, not what values the coordinates in each parts receive on the $q$ queries. Then
	\[
	\E_{\bfx}[ \dist_{\bfx}(g,h) ] = \E_{\lambda, r}[ \E_{\pi}[ \dist_{\bfx}(g,h)]]
	\]
	where $r$ represents the internal randomness of the algorithm outside of that used to generate $\bfx$. With probability at least $k^2/2^q$, the partition $\lambda$ completely separates the indices in $J$. Fix such a partition $\lambda$. Define $J^* = J \cup \supp(\psi)$. Then $|J^*| \le 2k$. Define $\bfy = (y^{(1)}, \ldots, y^{(q)})$ by setting $y^{(i)} = x^{(i)}_{J^*}$. Since $\dist_{\bfx}(g,h)$ only depend on the coordinates in $J^*$, we can write it equivalently as $\dist_{\bfy}(g,h)$. 

	Let $D$ denote the distribution on $\bfy$ induced by $\pi$. The distribution $D$ is close to but not equal to the uniform distribution $U$ on $\{0,1\}^{q \times |J^*|}$, since $D$ is equivalent to the distribution obtained by making drawing $(y_i^{(1)},\ldots,y_i^{(q)})$ for each $i \in J^*$ without replacement from $\{0,1\}^q$. Then
	\begin{align*}
	\Pr_{\bfy \sim D}[ |\dist_{\bfy}(g,h) - \E_{\bfy \sim U} \dist_{\bfy}(g,h)| \ge \varepsilon]
	&\le \dTV(D,U) + \Pr_{\bfy \sim U}[ |\dist_{\bfy}(g,h) - \E_{\bfy \sim U} \dist_{\bfy}(g,h)| \ge \varepsilon] \\
	&\le \tfrac{4k^2}{2^q} + e^{-16q\varepsilon^4}.
	\end{align*}
	In the last inequality, the bound $\dTV(D,U) \le \frac{(2k)^2}{2^q}$ is by the standard total variation bound between sampling with and without replacement~\cite{Freedman77} and the other bound on the other term is by Hoeffding's inequality.

\section{Applications}

In this short section, we show how Theorems~\ref{thm:submodular} and~\ref{thm:others} both follow directly from Theorem~\ref{thm:general} and the junta theorem of Feldman and \Vondrak.

\begin{proof}[Proof of Theorem~\ref{thm:submodular}]
By the first part of the Feldman--\Vondrak junta theorem, 
every submodular function $f \in \calF_n$ is 
$\frac{\epsilon}{10^6}$-close to a $k$-junta for some $k = O(\frac1{\epsilon^2}\log\frac{1}{\epsilon})$. Therefore, by Theorem~\ref{thm:general}, submodularity can be tested with $2^{O(k \log k)}/\epsilon^{10} = 2^{\tilde{O}(1/\epsilon^2)}$ queries in the $\ell_2$ testing model. By Fact~\ref{fact:lp-testing}, the number of queries for testing submodularity in the $\ell_p$ testing model for any $1 \le p < 2$ is also $2^{\tilde{O}(1/\epsilon^2)}$ and for any $p > 2$ it is $2^{\tilde{O}(1/(\epsilon^{p/2})^2)} = 2^{\tilde{O}(1/\epsilon^p)}$.
\end{proof}

\begin{proof}[Proof of Theorem~\ref{thm:others}]
By Lemma~\ref{lem:hierarchy}, additive functions, coverage functions, unit demand functions, OXS functions, and gross substitute functions are all also submodular. Therefore, the first part of the Feldman--\Vondrak junta theorem also applies to these functions and the rest of the proof is identical to that of Theorem~\ref{thm:submodular}.

Lemma~\ref{lem:hierarchy} also implies that fractionally subadditive functions are self-bounding, so the second part of the Feldman--\Vondrak junta theorem shows that every function $f$ that has either of these properties is $\frac{\epsilon}{10^6}$-close to a $k$-junta for some $k = 2^{O(\frac1{\epsilon^2})}$.
Therefore, by Theorem~\ref{thm:general}, fractional subadditivity and self-boundedness can both be tested with $2^{O(k \log k)}/\epsilon^{10} = 2^{2^{\tilde{O}(1/\epsilon^2)}}$ queries in the $\ell_2$ testing model; the general result for the $\ell_p$ testing model again follows directly from Fact~\ref{fact:lp-testing}.
\end{proof}

\bibliography{Bibliography}{}
\bibliographystyle{plain}

\appendix

\section{Missing proofs from Section~\ref{sec:prelims}}
\label{app:proofs}

We begin with the proof of Lemma~\ref{lem:infparts}. We emphasize that the proof below is essentially as found in~\cite{BWY15}; the reason we include it here is that the original statement of the proof only applied to Boolean-valued functions. As we see below, however, the same argument also holds for real-valued functions.

\begin{thm}
	\label{thm: intersecting families}
	(Dinur and Safra\cite{DS04}; Friedgut\cite{Friedgut08})
	Let $\mathcal{G}$ be a $t$-intersecting family of subsets of $[n]$ for some $t \geq 1$. For any $p < \frac{1}{t+1}$, the $p$-biased measure of $\mathcal{G}$ is bounded by $\mu_p(\mathcal{G}) \leq p^t$.
\end{thm}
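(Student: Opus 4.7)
The plan is to prove this via Friedgut's spectral technique, translating the $t$-intersecting condition into an independent-set problem on a weighted Cayley-like graph and then applying a Hoffman-type eigenvalue bound. Let $f = \mathbf{1}_{\calG}$, so that $\mu_p(\calG) = \E_{\mu_p}[f]$, where expectation is taken under the $p$-biased product measure on $\{0,1\}^n$. Define the graph $G_t$ on the vertex set $2^{[n]}$ in which $A$ is adjacent to $B$ iff $|A \cap B| < t$. By definition, $\calG$ is $t$-intersecting if and only if $\calG$ is an independent set of $G_t$, so our task reduces to bounding the $\mu_p$-measure of any independent set in $G_t$.

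First I would introduce the $p$-biased Fourier expansion on $\{0,1\}^n$ and the natural averaging operator $M$ associated with $G_t$, namely $(Mf)(A) = \sum_{B:\,|A \cap B| < t} w(A,B)\, f(B)$ with weights chosen so that $M$ is self-adjoint with respect to $\mu_p$ and $M\mathbf{1} = \mathbf{1}$. The key structural point is that $M$ is invariant under the action of the symmetric group $S_n$, so its eigenspaces refine the decomposition of $L^2(\{0,1\}^n, \mu_p)$ into the isotypic components indexed by partitions of $n$. This lets me write down an eigenvalue $\lambda_j$ for each level $j \in \{0,1,\ldots,n\}$, with all $p$-biased characters of "degree" $j$ lying in the same eigenspace.

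Next I would compute (or bound) the extremal eigenvalues $\lambda_{\max}$ and $\lambda_{\min}$. The weighted Hoffman bound then gives
\[
\mu_p(\calG) \;\le\; \frac{-\lambda_{\min}}{\lambda_{\max} - \lambda_{\min}},
\]
and the algebra should reveal that for $p < \tfrac{1}{t+1}$ the level-$t$ eigenvalue is the one achieving $\lambda_{\min}$, making the right-hand side equal to $p^t$. This matches the natural extremal family $\{A : [t] \subseteq A\}$, which has $\mu_p$-measure exactly $p^t$ and is an independent set, so the bound is tight.

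The main obstacle I anticipate is the eigenvalue computation itself: the operator $M$ sums over all pairs with $|A \cap B| < t$, and identifying $\lambda_j$ in closed form requires some combinatorial identities (this is exactly where the threshold $p < 1/(t+1)$ enters, determining which level is extremal). If the direct spectral computation proves unwieldy, the fallback is Friedgut's hypercontractive route: bound the high-degree portion of $f$ via the $(2,q)$-hypercontractive inequality for the $p$-biased cube, and combine with a correlation identity for indicators of intersecting families to force $\mu_p(\calG)$ below $p^t$. A completely different but viable route, following Dinur–Safra, is to use shifting to reduce to left-compressed $t$-intersecting families and then argue that each set in such a family must contain many "small" elements, but the spectral approach sketched above is the cleaner option and is the one I would pursue first.
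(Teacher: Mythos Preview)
The paper does not prove this theorem. It is stated in the appendix as a cited black-box result attributed to Dinur--Safra and Friedgut, and is then invoked (without any argument for it) inside the proof of Lemma~\ref{lem:infparts}. So there is no ``paper's own proof'' to compare against.

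Your outline is a faithful sketch of Friedgut's argument: encode the $t$-intersecting condition as an independent-set condition in a suitable weighted graph on $\{0,1\}^n$, diagonalize the associated $S_n$-invariant operator using the $p$-biased Fourier basis, identify the minimal eigenvalue with the level-$t$ eigenspace precisely when $p < 1/(t+1)$, and conclude via the weighted Hoffman bound. That is the correct high-level route. Two caveats if you actually want to carry it out: (i) the eigenvalue computation is where all the work lives, and you have only asserted the outcome---you would need to write down the operator explicitly and verify which $\lambda_j$ is extremal as a function of $p$; (ii) your ``fallback'' via hypercontractivity is not really a separate proof of this exact statement with the sharp constant $p^t$, so if the spectral computation stalls you would be stuck. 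For the purposes of this paper, though, none of this is needed: the result is simply quoted.
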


\begin{proof}[Proof of Lemma~\ref{lem:infparts}]
	For $0 \leq t \leq \frac{1}{2}$, let $\mathcal{G}_t = \{J \subseteq [n]: \Inf_f(\overline{J}) < t\varepsilon^2\}$ be the family of all the sets whose compliments have influence less than $t\varepsilon^2$. For any two sets $J, K \in \mathcal{G}_\frac{1}{2}$, the subadditivity of influence implies that 
	$$
	\Inf_f(\overline{J\cap K}) = \Inf_f(\overline{J} \cup \overline{K}) \leq \Inf_f(\overline{J}) + \Inf_f(\overline{K}) < \varepsilon^2.
	$$
	
	But $f$ is $\varepsilon$-far from every  $k$-junta, so for any two sets $J, K \in \mathcal{G}_\frac{1}{2}$, $|J\cap K| > k$, from Proposition~\ref{prop:Relation_between_distance_and_influence}. Which means $\mathcal{G}_\frac{1}{2}$ is a $k + 1$ intersecting family. There are two cases now, first one is, there is at least one set $J \in \mathcal{G}_\frac{1}{2}$ such that $|J| < 2k$, second one is all the sets $J \in \mathcal{G}_\frac{1}{2}$ will have $|J| \geq 2k$. We will show that in both the cases our lemma holds. In the first case let $J \in \mathcal{G}_\frac{1}{2}$ be a set which has fewer than $2k$ elements, with high probability the set $J$ is completely separated by the partition $\mathcal{P}$, and we know that for any $K \in \mathcal{G}_\frac{1}{2}$, $|J \cap K| \geq k+1$, which means $K$ is not covered by any union of $k$-parts in $\mathcal{P}$. Therefore, $\Inf_f(\overline{J}) \geq \frac{\varepsilon^2}{2} > \frac{\varepsilon^2}{4}$ as we wanted to show.
	
	Consider the case where, all the sets in $\mathcal{G}_\frac{1}{2}$ have more than $2k$ elements. Then $G_\frac{1}{4}$ is a $2k$ intersecting family. Otherwise, if there are two sets $J, K \in \mathcal{G}_\frac{1}{4}$ such that $|J \cap K| < 2k$, then $\Inf_f(\overline{J \cap K}) \leq \Inf_f(\overline{J}) + \Inf_f(\overline{K}) < \frac{\varepsilon^2}{4} + \frac{\varepsilon^2}{4} <  \frac{\varepsilon^2}{2}$, thus contradicting our assumption.
	
	Let $J \subseteq [n]$ be the union of $k$ parts in $\mathcal{P}$. Since $\mathcal{P}$ is a random partition, $J$ is a random subset obtained by including each element of $[n]$ in $J$ independently with probability $p = \frac{k}{r} < \frac{1}{2k+1}$. By Theorem~\ref{thm: intersecting families}, $\Pr_{\mathcal{P}}[I_{f}(\overline{J}) < \frac{\varepsilon^2}{4}] = \Pr[J \in \mathcal{G}_{\frac{1}{4}}] = \mu_{\frac{k}{r}}(\mathcal{G}_{\frac{1}{4}}) \leq (\frac{k}{r})^{2k}$. By the union bound the probability that there exists a set $J \subseteq [n]$ that is the union of $k$ parts in $\mathcal{P}$ for which $\Inf_f(\overline{J}) < \frac{\varepsilon^2}{4}$ is bounded above by $\binom{r}{k}(\frac{k}{r})^{2k} \leq (\frac{er}{k})^k (\frac{k}{r})^{2k} \leq (\frac{ek}{r})^k < \frac{1}{6}$.
\end{proof}

The proof of Proposition~\ref{prop:infest} is obtained by considering the \EstInf algorithm below. \\

\begin{algorithm}[H]
	Draw $x_1,\ldots,x_m$ uniformly and independently at random from $\{0,1\}^{\overline{S}}$\;
	Draw $y_1,\ldots,y_m,y'_1,\ldots,y'_m$ uniformly and independently at random from $\{0,1\}^{S}$\;
	Return $\frac{1}{2m} \sum_{i=1}^{m} \big(f(x_i, y_i) - f(x_i,y'_i)\big)^2$\;
	\caption{\textsc{$\EstInf(f,S,m)$}}
\end{algorithm}

\bigskip

The concentration of the estimated influence is obtained via the following (standard) version of Hoeffding's inequality. 

\newtheorem*{hoeffding}{Hoeffding's inequality}
\begin{hoeffding}
	Let $X_1,..., X_n$ be independent random variables bounded by $a_1 \leq X_i \leq b_i$. Let $X = X_1 + X_2 + \cdots X_n$ have expected value $E[X] = \mu$. Then for any $t > 0$,
	$$
	\Pr[|X - \mu| \geq t] \leq 2e^{-\frac{2t^2}{\sum_{i=1}^{n}(b_i - a_i)^2}}.
	$$
\end{hoeffding}

\end{document}